\crefname{lemma}{lemma}{Lemmas}
\newtheorem{theorem}{Theorem}
\newtheorem{proposition}{Proposition}
\newtheorem{corollary}{Corollary}
\newtheorem{lemma}{Lemma}
\newtheorem{definition}{Definition}
\newtheorem{conjecture}{Conjecture}
\newcommand{\FO}{\textsf{FO}}
\newcommand{\AC}{\textsf{AC}}
\newcommand{\NC}{\textsf{NC}}
\renewcommand{\L}{\textsf{L}}
\newcommand{\SL}{\textsf{SL}}
\newcommand{\NL}{\textsf{NL}}
\renewcommand{\P}{\textsf{P}}
\newcommand{\NP}{\textsf{NP}}
\newcommand{\MPC}{\textsf{MPC}}
\newcommand{\DMPC}{\textsf{DMPC}}
\title{Equivalence Classes and Conditional Hardness in Massively Parallel Computations}
\author{Danupon Nanongkai\\{\small KTH Royal Institute of Technology}\\{\small {\tt danupon@gmail.com}}
\and Michele Scquizzato\\{\small University of Padova}\\{\small {\tt scquizza@math.unipd.it}}}
\date{}
\begin{document}

\maketitle
\thispagestyle{empty}

\begin{abstract}
	The \emph{Massively Parallel Computation} (MPC) model serves as a common abstraction of many modern
	large-scale data processing frameworks, and has been receiving increasingly more attention over the past few years,
	especially in the context of classical graph problems. So far, the only way to
	argue lower bounds for this model is to condition on conjectures about the
	hardness of some specific problems, such as graph connectivity on promise graphs that are
	either one cycle or two cycles, usually called the \emph{one cycle vs.\ two cycles} problem.
	This is unlike the traditional arguments based on
	conjectures about complexity classes (e.g., $\P \neq \NP$), which are often
	more robust in the sense that refuting them would lead to groundbreaking
	algorithms for a whole bunch of problems.
	
	In this paper we present connections between problems and classes of
	problems that allow the latter type of arguments. These connections concern
	the class of problems solvable in a sublogarithmic amount of rounds in the MPC model,
	denoted by $\MPC(o(\log N))$, and some
	standard classes concerning space complexity, namely $\L$ and $\NL$, and suggest
	conjectures that are robust in the sense that refuting them would lead to many
	surprisingly fast new algorithms in the MPC model. We also obtain new conditional lower bounds,
	and prove new
	reductions and equivalences between problems in the MPC model. Specifically, our main
	results are as follows.
	
	\begin{itemize}
		\item Lower bounds conditioned on the one cycle vs.\ two cycles conjecture
		can be instead argued under the $\L \nsubseteq \MPC(o(\log N))$ conjecture:
		these two assumptions are equivalent, and refuting
		either of them would lead to $o(\log N)$-round MPC algorithms for a large
		number of challenging problems, including list ranking, minimum cut, and
		planarity testing. In fact, we show that these problems and many others
		require asymptotically the same number of rounds as the seemingly much
		easier problem of distinguishing between a graph being one cycle or two cycles.
		
		\item Many lower bounds that were argued under the one cycle vs.\ two cycles 
		conjecture can be argued under an even more robust (thus harder
		to refute) conjecture, namely $\NL \nsubseteq \MPC(o(\log N))$. Refuting this
		conjecture would lead to $o(\log N)$-round MPC algorithms for an even
		larger set of problems, including all-pairs shortest paths,
		betweenness centrality, and all aforementioned ones.
		Lower bounds under this conjecture hold for problems such as
		perfect matching and network flow.
	\end{itemize}
	
\end{abstract}

\newpage
\setcounter{page}{1}

\section{Introduction}\label{sec:intro}

The \emph{Massively Parallel Computation} (MPC) model is arguably the most popular model of computation that
captures the essence of several very successful general-purpose frameworks for massively parallel coarse-grained
computations on large data sets, such as MapReduce~\cite{DeanG08}, Hadoop~\cite{White12}, Spark~\cite{ZahariaXWDADMRV16},
and  Dryad~\cite{IsardBYBF07}. The main feature of this model is that a single commodity machine of a large cluster cannot store
the entirety of the input, but just a sublinear fraction of it. This is an important restriction since we think of the data set as
being very large. The computation proceeds in synchronous rounds, and in each of them the machines can exchange data
with each other with the sole restriction that no one can send or receive more data than it is capable of storing. The goal
is to keep the total number of rounds as low as possible.

This basic model has been much investigated in the past decade, mostly from an algorithmic point of view~\cite{KarloffSV10,SuriV11,LattanziMSV11,GoodrichSZ11,EneIM11,BahmaniMVKV12,BahmaniKV12,PietracaprinaPRSU12,RastogiMCS13,AfratiSSU13,KumarMVV15,HegemanP15,AndoniNOY14,JacobLS14,KiverisLMRV14,FishKLRT15,BeameKS17,ImMS17,BateniBDHKLM17,RoughgardenVW18,BoroujeniEGHS18,AhnG18,BehnezhadDH18,CzumajLMMOS18,YaroslavtsevV18,HarveyLL18,GhaffariGKMR18,AndoniSSWZ18,AssadiSW19,AssadiCK19,AssadiBBMS19,GhaffariU19,HajiaghayiSS19,LackiMW18,GamlathKMS19,ImM19,ItalianoLMP19,BehnezhadDELSM19,AndoniSZ19,BehnezhadBDFHKU19,ChangFGUZ19,BehnezhadHH19,GhaffariKU19,BehnezhadDELM19}. 
A common outcome is that, when $N$ denotes the input size, a solution terminating in $O(\log N)$ rounds is possible,
usually by simulating known PRAM algorithms~\cite{KarloffSV10,GoodrichSZ11}, but going below that resisted the
efforts of many researchers. Recently, a few works managed to break the $O(\log N)$ barrier by relaxing a bit the
sublinear constraint on the memory size, and showed that some graph problems allow for $o(\log N)$-round
solutions in the so-called \emph{near-linear memory} regime, whereby machines have memories of size $\tilde O(n)$,
where $n$ is the number of nodes in the graph~\cite{CzumajLMMOS18,GhaffariGKMR18,AssadiBBMS19,AssadiCK19,BehnezhadHH19}.\footnote{Notice that this relaxes the sublinear constraint on the memory size in the case of sparse graphs.} However,
without this kind of relaxations only a handful of problems are known to admit a $o(\log N)$-round algorithm~\cite{GhaffariU19,HajiaghayiSS19,ChangFGUZ19}.\footnote{Some algorithms have been analyzed in terms of other
parameters, such as the diameter~\cite{AndoniSSWZ18,AndoniSZ19,BehnezhadDELM19} or the spectral gap~\cite{AssadiSW19}
of the graph. The round complexity of these algorithms is $o(\log N)$ in some cases, but it remains $\Omega(\log N)$
in general. In this paper we do not consider this kind of parameterized analysis.} A fundamental question is thus whether
many known $O(\log N)$-round algorithms can be complemented with tight lower bounds. 

Unfortunately, proving {\em unconditional} lower bounds---that is, without any assumptions---seems extremely
difficult in this model, as it would imply a breakthrough in circuit complexity: Roughgarden~et~al.~\cite{RoughgardenVW18}
showed that, when enough machines are available, proving any super-constant lower bound for any problem in $\P$ would
imply new circuit lower bounds, and specifically would separate $\NC^1$ from $\P$---a long-standing open question in
complexity theory that is a whisker away from the $\P$ vs.\ $\NP$ question. This means that the lack of super-constant
lower bounds in the MPC model can be blamed on our inability to prove some computational hardness results.

In light of this barrier, the focus rapidly shifted to proving \emph{conditional} lower bounds, that is, lower bounds conditioned
on plausible hardness assumptions. One widely-believed assumption concerns graph connectivity, which, when machines
have a memory of size $O(n^{1-\epsilon})$ for a constant $\epsilon > 0$, is conjectured to require $\Omega(\log n)$
MPC rounds~\cite{KarloffSV10,RastogiMCS13,BeameKS17,RoughgardenVW18,YaroslavtsevV18}.\footnote{Observe that in the
near-linear memory regime this conjecture breaks: graph connectivity can be solved in $O(1)$ MPC rounds~\cite{BehnezhadDH18}.}
The same conjecture is often made even for the special case of the problem where the graph consists of either one cycle or two
cycles, usually called {\em one cycle vs.\ two cycles} problem. The one cycle vs.\ two cycles conjecture has been proven useful to
show conditional lower bounds for several problems, such as maximal independent set, maximal matching~\cite{GhaffariKU19},
minimum spanning trees in low-dimensional spaces~\cite{AndoniNOY14}, single-linkage clustering~\cite{YaroslavtsevV18},
$2$-vertex connectivity~\cite{AndoniSZ19}, generation of random walks~\cite{LackiMOS}, as well as parameterized conditional
lower bounds~\cite{BehnezhadDELM19}.\footnote{The one cycle vs.\ two cycles problem is usually stated such that, in the
case of two cycles, these have $n/2$ nodes each. However, we observe that all the mentioned conditional lower bounds hold
also when the two cycles may have arbitrary lengths.}

However, it is not clear whether the one cycle vs.\ two cycles conjecture is true or not, and if not, what its refutation implies.
This situation is in contrast with traditional complexity theory, where a refutation of a conjectured relationship between
complexity classes would typically imply groundbreaking algorithmic results for a large number of problems; for example,
if the $\P \neq \NP$ conjecture fails, then there would be efficient (polynomial-time) algorithms for {\em all} problems
in $\NP$, including a number of ``hard'' problems. To put it another way, a conjecture like $\P \neq \NP$ is more
{\em robust} in the sense that it is extremely hard to refute---doing so requires a major algorithmic breakthrough. 
The goal of this paper is to explore conjectures of this nature in the MPC model.

\subsection{Summary of Contributions}

In this paper we show many connections between problems and classes of problems that lead to more robust conjectures
for the MPC model. In particular, we study the connections between the class of problems solvable in a sublogarithmic
amount of rounds in the MPC model with $O(N^{1-\epsilon})$ memory per machine for some constant $\epsilon \in (0,1)$
and up to polynomially many machines, denoted by $\MPC(o(\log N))$, and the standard space complexity classes $\L$ and
$\NL$. (Recall that $\L$ and $\NL$ are the classes of decision problems decidable in logarithmic space on deterministic
and nondeterministic Turing machines, respectively.) The connection between MPC and these complexity classes is
enabled by a recent result showing how Boolean circuits can be efficiently simulated in the MPC model. In short, we present
a set of observations and reductions that suggest that $\L \nsubseteq \MPC(o(\log N))$ and $\NL \nsubseteq \MPC(o(\log N))$
are two robust conjectures that might play crucial roles in arguing lower bounds in the MPC model, as they already imply tight
conditional lower bounds for a large number of problems. In particular, with some assumptions on the total amount of
memory (equivalently, machines) available in the system, we can conclude the following. 

\begin{enumerate}
	\item\label{item:one} {\bf Robustness:} The one cycle vs.\ two cycles conjecture is robust, since it is equivalent to conjecturing that $\L \nsubseteq \MPC(o(\log N))$, and refuting this conjecture requires showing $o(\log N)$-round algorithms for all problems in $\L$. This class includes many important problems such as graph connectivity, cycle detection, and planarity testing.
	
\begin{figure}[h]
    \begin{center}
       \includegraphics[trim=0cm 0.1cm 0cm 0.1cm, clip=true,scale=0.69]{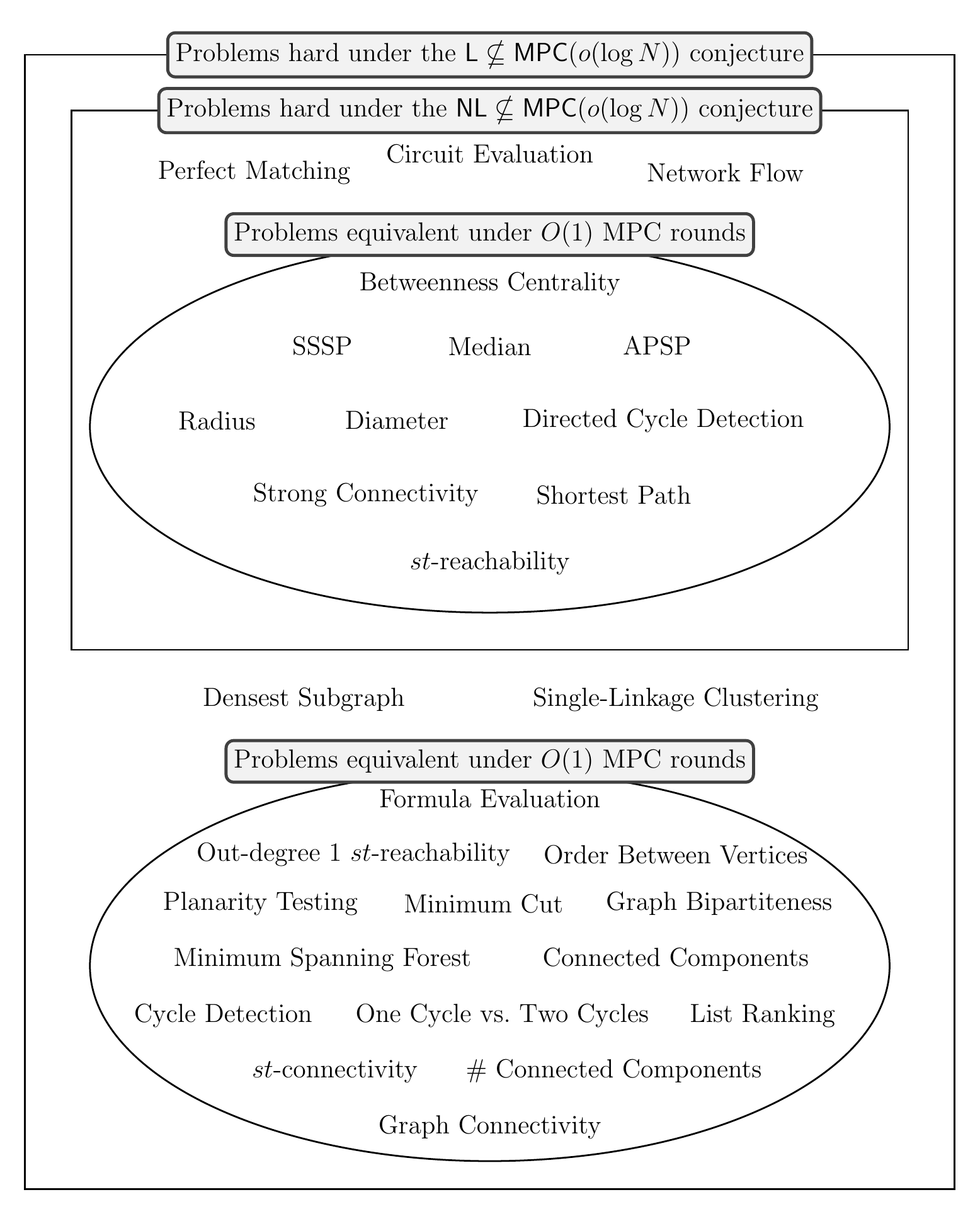}
       \caption{A classification of the complexity of some prominent problems in the MPC model.
       Problems in the top ellipse are on unweighted graphs.}\label{fig:results}
    \end{center}
\end{figure}

	\item {\bf Equivalences:} All $\L$-complete problems are {\em equivalent} in the sense that they require asymptotically the same number of rounds.  This means that the one cycle vs.\ two cycles problem, which is $\L$-complete (see Appendix~\ref{sec:space_1vs2cycles}), is equivalent to many seemingly harder problems, such as graph bipartiteness, minimum cut, and formula evaluation (see problems in the bottom ellipse in Figure~\ref{fig:results} for more). This also means that the conjectures on the hardness of graph connectivity and on the hardness of the one cycle vs.\ two cycles problem are equivalent. 

	 Additionally, all $\NL$-complete problems and a few others are also equivalent. These problems include $st$-reachability, all-pairs shortest paths (both the directed and undirected cases) on unweighted graphs, diameter, and betweenness centrality (see problems in the top ellipse in Figure~\ref{fig:results} for more).

	\item {\bf New conditional lower bounds:} Assuming the one cycle vs.\ two cycles conjecture (equivalently, $\L \nsubseteq \MPC(o(\log N))$), there are no $o(\log N)$-round algorithms for all $\L$-hard problems and a few other problems. This implies new conditional lower bounds for more than a dozen of problems, such as betweenness centrality, planarity testing, graph bipartiteness, list ranking, formula evaluation, and densest subgraph  (see problems in the big rectangle in Figure~\ref{fig:results} for more). Previously only a few lower bounds were known, e.g., for single-linkage clustering~\cite{YaroslavtsevV18} and maximum matching~\cite{Onak19}. (Of course, lower bounds for connectivity-related problems are trivially implied by the  one cycle vs.\ two cycles conjecture.) Most of our lower bounds are tight (e.g., lower bounds for problems in the ellipses in Figure~\ref{fig:results}).

	\item \label{item:four} {\bf A more robust conjecture.} For $\NL$-hard problems, we can argue lower bounds under the more robust $\NL \nsubseteq \MPC(o(\log N))$ conjecture. These problems include perfect matching, single-source shortest paths, diameter, and network flow (see problems in the small rectangle in Figure~\ref{fig:results} for more). Note that, since $\L \subseteq \NL$, the $\NL \nsubseteq \MPC(o(\log N))$ conjecture is more robust (i.e., safer, more likely to be true) than its counterpart with $\L$. 
\end{enumerate}

\subsection{Related Work}
Fish et al.~\cite{FishKLRT15} were perhaps the first to establish a connection between the MPC model
and classical complexity classes. Besides the introduction of a uniform version of the model, they showed
that constant-round MPC computations can simulate sublogarithmic space-bounded Turing machines, and
then proved strict hierarchy theorems for the MPC model under certain complexity-theoretic assumptions.

Roughgarden et al.~\cite{RoughgardenVW18} discuss connections between the MPC model and Boolean
circuits. They show that standard degree arguments for circuits can be applied to MPC computations
as well, and specifically that any Boolean function whose polynomial representation has degree $d$ requires
$\Omega(\log_s d)$ rounds of MPC using machines with memory $s$. This implies an $\Omega(\log_s n)$
lower bound on the number of rounds for graph connectivity. Perhaps more interestingly, the authors show
a barrier for unconditional lower bounds by observing that, if enough machines are available, then proving
any super-constant lower bound in the MPC model for any problem in $\P$ would imply new circuit lower bounds,
and specifically would separate $\NC^1$ from $\P$, thus answering a notorious open question in circuit complexity.
This result follows by showing that, with a number of available machines polynomial in the number of input nodes
of the circuit, $\NC^1$ circuits can be efficiently simulated in the MPC model. We observe that their argument
readily generalizes to show that any bounded fan-in Boolean circuit of depth $d$ and of polynomial size can be
simulated in $O(\left\lceil d/\log s \right\rceil)$ MPC rounds. Very recently, Frei and Wada~\cite{FreiW19} prove
the same result improving over the amount of machines required for the simulation---from linear to strongly
sublinear in the size of the circuit.

Given the difficulty of proving lower bounds for all algorithms, one can (a) prove lower bounds for restricted classes
of algorithms, or (b) prove conditional lower bounds: assume one lower bound, and transfer the conjectured
hardness to other problems via reductions (with common examples being the theory of $\NP$-hardness and its
more recent analogue for problems in $\P$, usually called \emph{fine-grained complexity theory}). Both paths
give a deep understanding and warn us what not to try when designing algorithms.

Within the first line of inquiry, Pietracaprina et al.~\cite{PietracaprinaPRSU12} prove lower bounds for matrix
multiplication algorithms that compute all the $n^3$ elementary products (thus ruling out Strassen-like
algorithms). Similar kinds of limitations are required by Beame et al.~\cite{BeameKS17}, Jacob et al.~\cite{JacobLS14},
Im and Moseley~\cite{ImM19}, and Assadi and Khanna~\cite{AssadiK18} to prove lower bounds for $st$-connectivity,
list ranking, graph connectivity, and maximum coverage, respectively. Of a similar flavor are the results of Afrati
et al.~\cite{AfratiSSU13}, who show, for a fixed number of rounds (usually a single round), space-communication
tradeoffs.

Within the second line of inquiry fall~\cite{AndoniNOY14,YaroslavtsevV18,AndoniSZ19,LackiMOS}, which use the
conjecture on the hardness of graph connectivity as a hardness assumption for proving conditional lower bounds
for other problems such as minimum spanning trees in low-dimensional spaces, single-linkage clustering,
$2$-vertex connectivity, and generating random walks, respectively. Very recently, Ghaffari et al.~\cite{GhaffariKU19}
present conditional lower bounds for other key graph problems such as constant-approximate maximum matching,
constant-approximate vertex cover, maximal independent set, and maximal matching. Their lower bounds also rest on
the hardness of graph connectivity, and are obtained by introducing a new general method that lifts (unconditional) lower
bounds from the classical LOCAL model of distributed computing to the MPC model. Assuming the same conjecture,
Behnezhad et al.~\cite{BehnezhadDELM19} show a parameterized lower bound of $\Omega(\log D)$ for identifying
connected components in graphs of diameter $D$. By observing that a couple of specific $\NC^1$ reductions can be
simulated in $O(1)$ MPC rounds, Dhulipala et al.~\cite{DhulipalaDKPSS20} show that if a variant of graph connectivity on
batch-dynamic graphs can be solved within a certain amount of rounds, so can all the problems in $\P$. A conditional
lower bound following a different kind of argument is given by Andoni et al.~\cite{AndoniSSWZ18}, who show that an
$n^{o(1)}$-round MPC algorithm that answers $O(n+m)$ pairs of reachability queries in directed graphs with $n$
nodes and $m$ edges can be simulated in the RAM model yielding faster Boolean matrix multiplication algorithms.

% Comparison with other models for parallel and distributed computing:
Several other models have been developed in the quest to establish rigorous theoretical foundations of (massively)
parallel computing, with the PRAM being one of the most investigated. The MPC model is more powerful than the PRAM
since PRAM algorithms can be simulated in the MPC model with constant slowdown~\cite{KarloffSV10,GoodrichSZ11},
and some problems (such as evaluating the XOR function) can be solved much faster in the MPC model.

Valiant's \emph{bulk-synchronous parallel} (BSP) model~\cite{Valiant90} anticipated many of the features of MPC-type
computations, such as the organization of the computation in a sequence of synchronous rounds (originally called
\emph{supersteps}). Several papers (e.g., \cite{Goodrich99,MacKenzieR98,AdlerDJKR98,ScquizzatoS14,BilardiSS18})
explored the power of this model by establishing lower bounds on the number of supersteps or on the communication
complexity required by BSP computations, where the latter is defined as the sum, over all the supersteps of an algorithm,
of the maximum number of messages sent or received by any processor. Lower bounds on the number of supersteps are
usually of the form $\Omega(\log_h N)$, where $h$ is the maximum number of messages sent or received by any processor
in any superstep.

Another model aiming at serving as an abstraction for modern large-scale data processing frameworks is the
\emph{$k$-machine} model~\cite{KlauckNPR15}. Partly inspired by message-passing models in distributed computing,
in the $k$-machine model there are $k$ available machines, and in each round any pair of machines is allowed to communicate
using messages of a given size. Hard bounds on the point-to-point communication lead to very strong round lower
bounds in this model~\cite{KlauckNPR15,PanduranganRS18,PanduranganRS18a}.

The \emph{congested clique} (see, e.g., \cite{DruckerKO13}) is a model for network computations bearing
some similarities with the MPC model. On one hand, algorithms for this model can be simulated in the MPC
model---under some specific conditions on the size of the local memories~\cite{HegemanP15,GhaffariGKMR18,BehnezhadDH18}.
On the other hand, analogously to the MPC model, proving a super-constant unconditional lower bound in the
congested clique for a problem in $\NP$ would imply better circuit size-depth tradeoffs for such a problem than
are currently known~\cite{DruckerKO13}. This induced further investigations of the model under the lens of complexity
theory~\cite{KorhonenS18}.

% We assume that the reader is familiar with the standard complexity classes
% L, NL and P (see, e.g., the textbook [AB09]) and with the logspace-uniform circuit
% complexity classes NC^k, and AC^k (see, e.g., the textbook [Vol99]).

\section{Preliminaries}

\subsection{The MPC Model}
The \emph{Massively Parallel Computation (MPC)} model is a theoretical abstraction capturing the main distinguishing
aspects of several popular frameworks for the parallel processing of large-scale datasets. It was introduced by Karloff,
Suri, and Vassilvitskii~\cite{KarloffSV10}, and refined in subsequent work~\cite{GoodrichSZ11,BeameKS17,AndoniNOY14}.

In this model the system consists of $p$ identical machines (processors), each with a local memory of size $s$. If $N$
denotes the size of the input, then $s = O(N^{1-\epsilon})$ for some constant $\epsilon > 0$, and the total amount
of memory available in the system is $p \cdot s = O(N^{1+\gamma})$ for some constant $\gamma \geq 0$. The space
size is measured by words, each of $\Theta(\log N)$ bits. Initially, the input is adversarially distributed across the machines.
The computation proceeds in synchronous rounds. In each round, each machine performs some computation on the
data that resides in its local memory, and then, at the end of the round, exchanges messages with other machines.
The total size of messages sent or received by each machine in each round is bounded by $s$.\footnote{This means
that there is no computation performed on the fly on incoming data.} The goal is to minimize the total number of rounds.

For problems defined on graphs, the input size $N$ is equal to $n+m$, where $n$ is the number of nodes of the graph
and $m$ is the number of edges. When considering graph problems, in this paper we assume $s = O(n^{1-\epsilon})$.
This regime of memory size, usually called \emph{strongly sublinear memory} regime, is always in compliance
with the aforementioned constraint on the size of the local memory, even when graphs are sparse, for which the
constraint is the most restrictive.

The value of parameter $\epsilon$ can be chosen by the end user. In particular, when solving problem $A$ on input
instance $I$ through a reduction to problem $B$ on input instance $I'$ of increased size, a call to the procedure for
$B$ should set the value of this parameter to a constant $\epsilon' \in (0,1)$ such that $\vert I' \vert^{1-\epsilon'}
= O(\vert I \vert^{1-\epsilon})$.

Since we want to relate the MPC model to classical complexity classes, one must make sure that the model is \emph{uniform},
by which we mean, roughly speaking, that the same algorithm solves the problem for inputs of all (infinitely many) sizes.
Fish et al.~\cite{FishKLRT15} dealt with this issue observing that Karloff et al.'s original definition of the model~\cite{KarloffSV10} 
is non-uniform, allowing it to decide undecidable languages, and thus by reformulating the definition of the model to make
it uniform. Building on that reformulation, and letting $f \colon \mathbb N \to \mathbb R^+$ be a function, we define the
class $\MPC(f(N))$ to be the class of problems solvable in $O(f(N))$ MPC rounds by a uniform family of MPC computations.

\subsection{Circuit Complexity Background}

In this section we review the \emph{Boolean circuit} model of computation. An $n$-input, $m$-output Boolean circuit
$C$ is a directed acyclic graph with $n$ \emph{sources} (i.e., nodes with no incoming edges), called \emph{input nodes},
and $m$ \emph{sinks} (i.e., nodes with no outgoing edges). All non-source nodes are called \emph{gates}, and are labeled
with one among AND, OR, or NOT. The \emph{fan-in} of a gate is the number of its incoming edges. The \emph{size}
of $C$ is the total number of nodes in it. The \emph{depth} of $C$ is the number of nodes in the longest path in $C$.

Note that to decide an entire language, which may contain inputs of arbitrary lengths, we need a \emph{family} of
Boolean circuits, one for each input length. In other words, the Boolean circuit is a natural model for \emph{non-uniform}
computation. When we want to establish relationships between circuit classes and standard machine classes, we need
to define \emph{uniform} circuit classes, with a restriction on how difficult it can be to construct the circuits. The usual
notion of uniformity in this case is that of \emph{logspace-uniformity}: a family of circuits $\{C_n\}_{n \in \mathbb{N}}$
is logspace-uniform if there is an implicitly log-space computable function mapping $1^n$ to the description of the circuit
$C_n$, where implicitly log-space computable means that the mapping can be computed in logarithmic space---see next
section for the definition of logarithmic space.

\begin{definition}[\cite{AroraB09,Sipser97}]\label{def:NC}
For $i \geq 1$, $\NC^i$ is the class of languages that can be decided by a logspace-uniform family of Boolean circuits with
a polynomial number of nodes of fan-in at most two and $O(\log^i n)$ depth. The class $\NC$ is $\cup_{i \geq 1} \NC^i$.
\end{definition}

The complexity classes $\AC^i$ and $\AC = \cup_{i \geq 0} \AC^i$ are defined exactly as $\NC^i$ and $\NC$ except
that gates are allowed to have unbounded fan-in. Hence, for every $i \in \mathbb N$, $\NC^i \subseteq \AC^i$. By
replacing gates with large fan-in by binary trees of gates each with fan-in at most two, we also have $\AC^i \subseteq \NC^{i+1}$.

\subsection{Space Complexity Background}

Space complexity measures the amount of space, or memory, necessary to solve a computational problem.
It serves as a further way of classifying problems according to their computational difficulty, and its study
has a long tradition, which brought several deep and surprising results.

Particularly relevant to this paper are some low-space complexity classes, and specifically, classes of problems
that can be solved with \emph{sublinear} memory. In order for this to make sense---sublinear space is not
even enough to store the input---one must distinguish between the memory used to hold the input and the
working memory, which is the only memory accounted for. Formally, we shall modify the computational model,
introducing a Turing machine with two tapes: a read-only input tape, and a read/write working tape. The first
can only be read, whereas the second may be read and written in the usual way, and only the cells scanned
on the working tape contribute to the space complexity of the computation. Using this two-tape model,
one can define the following complexity classes.

\begin{definition}[\cite{AroraB09,Sipser97}]\label{def:L-NL}
$\L$ is the class of languages that are decidable in logarithmic space on a deterministic Turing machine.
$\NL$ is the class of languages that are decidable in logarithmic space on a nondeterministic Turing machine.
\end{definition}

Informally, logarithmic space is sufficient to hold a constant number of pointers into the input and counters
of $O(\log N)$ bits ($N$ is the length of the input), and a logarithmic number of boolean flags.

As in other standard complexity classes, problems complete for $\L$ or $\NL$ are defined to be the ones
that are, in a certain sense, the most difficult in such classes. To this end, we first need to decide on the kind
of reducibility that would be appropriate. Polynomial-time reducibility would not be very useful because
$\L \subseteq \NL \subseteq \P$, which implies that every language in $\L$ (resp., $\NL$), except $\emptyset$
and $\Sigma^*$, would be $\L$-complete (resp., $\NL$-complete). Hence we need weaker versions of reduction,
ones that involve computations that correspond to sub-classes of $\L$ and $\NL$. One notion of reducibility
that makes sense for the class $\L$ is that of $\NC^1$ reducibility~\cite{CookM87}, where $\NC^1$ is the class
of problems solvable in logarithmic depth by a uniform family of Boolean circuits of bounded fan-in.

\begin{definition}\label{def:L-complete}
A language $B$ is \emph{\L-complete} if (1) $B \in \L$, and (2) every $A$ in $\L$ is $\NC^1$ reducible to $B$.
\end{definition}

$\NC^1$ reducibility has been defined in~\cite{Cook85}. In the literature reductions of even-lower level than
$\NC^1$ are used to identify meaningful notions of $\L$-completeness. Examples are \emph{projections}
and \emph{first-order reductions}. For example, the class first-order logic, denoted as $\FO$, equals the complexity class
$\AC^0$, and since $\AC^0 \subset \NC^1$, a first-order reduction is strictly stronger than an $\NC^1$ reduction.

A good choice for the class $\NL$ is to use log-space reductions, that is, reductions computable by a deterministic Turing
machine using logarithmic space (see~\cite{AroraB09,Sipser97} for a more formal definition of log-space reducibility).

\begin{definition}[\cite{AroraB09,Sipser97}]\label{def:NL-complete}
A language $B$ is \emph{\NL-complete} if (1) $B \in \NL$, and (2) every $A$ in $\NL$ is log-space reducible to $B$.
\end{definition}

Following standard terminology we say that a language is \emph{\L-hard (under $\NC^1$ reductions)}
(resp., \emph{\NL-hard (under log-space reductions)}) if it merely satisfies condition (2) of Definition~\ref{def:L-complete}
(resp., Definition~\ref{def:NL-complete}).

In Appendix~\ref{sec:space} we recall some known results on the space complexity of several fundamental problems.

\section{Massively Parallel Computations and Space Complexity Classes}\label{sec:MPC-Space}

In this section we recall a recent result showing that Boolean circuits can be efficiently simulated in the MPC model,
and then we build on it to derive new results and conjectures.

\subsection{Efficient Circuit Simulation in the MPC Model}

We now recall the main result in~\cite{FreiW19} which, roughly speaking, says that any bounded fan-in Boolean
circuit of depth $d$ and of polynomial size can be simulated in $O(\left\lceil d/\log s \right\rceil)$ MPC rounds.
This result is already implicit in~\cite{RoughgardenVW18}, where it is achieved by a simple simulation whereby
each gate of the circuit is associated with a machine whose responsibility is to compute the output of the gate.
This requires the availability of a number of machines linear in the size of the circuit. Very recently, Frei and
Wada~\cite{FreiW19} came up with a more sophisticated strategy, which uses only a strongly sublinear amount
of machines. Their strategy employs two distinct simulations: for $\NC^1$ circuits they exploit Barrington's
well-known characterization of $\NC^1$ in terms of bounded-width polynomial-size branching programs,
and thus simulate such branching programs in a constant number of rounds; for the higher levels of the $\NC$
hierarchy, the Boolean circuits themselves are directly simulated, suitably dividing the computation into the
simulation of sub-circuits of depth $O(\log n)$, each to be accomplished in $O(1)$ rounds.

The authors work in the original model of Karloff et al.~\cite{KarloffSV10}, but their result seamlessly
applies in the refined MPC model.

\begin{theorem}[\cite{FreiW19}]\label{thm:circuit_simulation}
Let $\DMPC^i$ denote the class of problems solvable by a deterministic MPC algorithm in $O(\log^i N)$ rounds
with $O(N^{1-\epsilon})$ local memory per machine and $O(N^{2(1-\epsilon)})$ total memory. Then,
\[
\NC^{i+1} \subseteq \DMPC^i
\]
for every $i \in \mathcal{N}$ and for every $\epsilon \in (0,1/2)$. (When $i=0$, the result holds also for $\epsilon=1/2$.)
\end{theorem}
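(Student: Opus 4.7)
The plan is to handle the $\NC$ hierarchy in two parts: a constant-round base case $\NC^1 \subseteq \DMPC^0$, and a reduction for higher levels that expresses any $\NC^{i+1}$ computation as a sequence of $O(\log^i N)$ $\NC^1$ sub-computations, each simulated in $O(1)$ rounds by the base case.

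For the base case, I would invoke Barrington's theorem, which turns a logspace-uniform $\NC^1$ circuit into an equivalent width-$5$ permutation branching program of polynomial length $L$. Such a program is a sequence of instructions $(i_t, \sigma_t, \tau_t)$ where the permutation applied at step $t$ is either $\sigma_t$ or $\tau_t \in S_5$ depending on the value of input bit $x_{i_t}$, and the output is the composition of the $L$ selected permutations. Two features make this MPC-friendly: composition in $S_5$ is associative and has constant-size representation, and each step depends on a single input bit. I would distribute the program into segments of length $\Theta(N^{1-\epsilon})$ across the machines, route to each machine the input bits it needs, and have each machine locally compute the product of its selected permutations. The $L/\Theta(N^{1-\epsilon})$ intermediate products are then combined by a balanced tree reduction of fan-in $\Theta(N^{1-\epsilon})$, which completes in a constant number of rounds since $L$ is polynomial in $N$.

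For $i \geq 1$, I would take a logspace-uniform $\NC^{i+1}$ family $\{C_N\}$ of depth $O(\log^{i+1} N)$ and polynomial size, and partition each $C_N$ into $O(\log^i N)$ horizontal slabs of depth $\Theta(\log N)$. Each slab is itself an $\NC^1$ sub-circuit whose inputs are the outputs of the previous slab; by the base case, its value is computed in $O(1)$ MPC rounds. Evaluating the slabs in order then yields the claimed $O(\log^i N)$ round bound. Logspace-uniformity is what lets each machine recover, on demand, the portion of the circuit description it is responsible for.

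The main obstacle will be the memory accounting in the base case. With a branching program of length $L = N^{c}$ for some constant $c$ and per-machine memory $N^{1-\epsilon}$, one needs $\Theta(N^{c-1+\epsilon})$ machines and total memory $\Theta(N^{c})$, which must fit inside the $O(N^{2(1-\epsilon)})$ budget. This forces $c \leq 2(1-\epsilon)$, and is the source of the restriction $\epsilon < 1/2$ in general (and $\epsilon \leq 1/2$ when $i = 0$, where $c$ can be pushed close to $1$ by using a tight version of Barrington's construction). Ensuring that the tree reduction still finishes in $O(1)$ rounds at fan-in $N^{1-\epsilon}$ is automatic so long as $\epsilon$ is bounded away from $1$, which is assumed throughout.
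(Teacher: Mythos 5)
Your base case matches what the paper describes as Frei and Wada's strategy for $\NC^1$: Barrington's characterization of $\NC^1$ as width-$5$ polynomial-length permutation branching programs, then a blocked tree-reduction over $S_5$ products in $O(1)$ rounds at fan-in $\Theta(N^{1-\epsilon})$. That part of the blueprint is faithful.

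Your induction step, however, departs from the cited approach in a substantive way. The paper explicitly describes two \emph{distinct} simulations: Barrington-based branching programs for $\NC^1$, but for higher levels ``the Boolean circuits themselves are directly simulated, suitably dividing the computation into the simulation of sub-circuits of depth $O(\log n)$, each to be accomplished in $O(1)$ rounds.'' You instead propose to treat each depth-$\Theta(\log N)$ slab as an $\NC^1$ sub-circuit and feed it back through the Barrington-based base case. This is not cosmetic, and I think it breaks. The base-case memory analysis is for a \emph{single-output} circuit: one branching program, length $4^d$. A slab of an $\NC^{i+1}$ circuit is a \emph{multi-output} circuit whose output wires are the frontier feeding the next slab, and that frontier can have width comparable to the circuit size, i.e.\ $N^a$ for an arbitrary constant $a$. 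Applying Barrington per output wire multiplies the per-slab work by $4^{c\log N}=N^{2c}$ for each of the $\Theta(N^a)$ wires, pushing the per-slab footprint to roughly $N^{a+2c}$, which is not covered by the $O(N^{2(1-\epsilon)})$ budget you derived for the single-output case. The direct slab simulation does not incur this exponential-in-depth blowup --- each gate of the slab is evaluated once --- which is presumably why Frei and Wada keep the two simulations separate rather than recursing the Barrington argument through the slabs. To salvage your plan you would have to show that the branching programs for different output wires of a slab can share work, or that slab width can be controlled; as written, the proposal silently extends the single-output accounting to a setting where it no longer applies. Your explanation of the $\epsilon<1/2$ boundary (``$c$ can be pushed close to $1$'' at $i=0$) also glosses over the fact that the depth constant of an $\NC^1$ family is not in your control, so the length of the Barrington program is not free to be tuned down to $N^{1}$.
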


Setting $i=0$, we have the following.

\begin{corollary}\label{cor:circuit_simulation}
The class $\NC^1$ can be simulated in $O(1)$ MPC rounds with $O(N^{1-\epsilon})$ local memory per machine
and $O(N^{2(1-\epsilon)})$ total memory, for any constant $\epsilon \in (0,1/2]$.
\end{corollary}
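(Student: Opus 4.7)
The plan is simply to specialize Theorem~\ref{thm:circuit_simulation} to $i=0$. By definition, $\DMPC^0$ is the class of problems solvable by a deterministic MPC algorithm in $O(\log^0 N) = O(1)$ rounds with $O(N^{1-\epsilon})$ local memory per machine and $O(N^{2(1-\epsilon)})$ total memory. Setting $i=0$ in the theorem therefore gives $\NC^1 \subseteq \DMPC^0$, which is precisely the claim that every language in $\NC^1$ can be decided in $O(1)$ MPC rounds under the stated memory budget.

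The only subtlety is the range of $\epsilon$. Theorem~\ref{thm:circuit_simulation} as stated gives the inclusion for every $\epsilon \in (0,1/2)$, and its parenthetical remark records that, when $i=0$, the conclusion continues to hold for $\epsilon = 1/2$ as well. Combining these two facts extends the inclusion to the closed interval $\epsilon \in (0,1/2]$ that appears in the corollary, matching the statement exactly.

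Since all the actual work, namely Barrington's characterization of $\NC^1$ via bounded-width polynomial-size branching programs and the constant-round MPC simulation of such branching programs on strongly sublinear memory, is packaged inside Theorem~\ref{thm:circuit_simulation}, there is no genuine obstacle to overcome here. The only items to double-check when writing the proof are that the substitution $i \mapsto 0$ correctly propagates through the $O(\log^i N)$ round bound to yield a constant, and that no hidden dependence on $i$ forbids the boundary value $\epsilon = 1/2$; both verifications are immediate from the statement of the theorem.
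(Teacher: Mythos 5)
Your proof is correct and matches the paper's own approach exactly: the paper also obtains the corollary by setting $i=0$ in Theorem~\ref{thm:circuit_simulation}, with the parenthetical remark handling the boundary case $\epsilon = 1/2$. Nothing further is needed.
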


Since $\NC^1 \subseteq \L \subseteq \NL \subseteq \NC^2$ (see, e.g., \cite{Papadimitriou94}), an immediate
by-product of Theorem~\ref{thm:circuit_simulation}  is that some standard space complexity classes can
be efficiently simulated in the MPC model.

\begin{corollary}\label{cor:circuit_simulation2}
The class $\NC^2$, and thus the classes $\L$ and $\NL$, can be simulated in $O(\log N)$ MPC rounds with
$O(N^{1-\epsilon})$ local memory per machine and $O(N^{2(1-\epsilon)})$ total memory, for any constant
$\epsilon \in (0,1/2)$.
\end{corollary}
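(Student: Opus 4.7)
The plan is to obtain the corollary as a direct specialization of \Cref{thm:circuit_simulation} combined with the standard inclusions among low-depth circuit classes and log-space classes. First I would instantiate \Cref{thm:circuit_simulation} at $i=1$: this yields $\NC^2 \subseteq \DMPC^1$, which by the definition of $\DMPC^1$ is exactly the statement that every problem in $\NC^2$ is solvable by a deterministic MPC algorithm running in $O(\log N)$ rounds using $O(N^{1-\epsilon})$ local memory per machine and $O(N^{2(1-\epsilon)})$ total memory, for every constant $\epsilon \in (0,1/2)$. This already settles the $\NC^2$ part of the claim.

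Next, to transfer the simulation to $\L$ and $\NL$, I would invoke the classical chain of inclusions $\NC^1 \subseteq \L \subseteq \NL \subseteq \NC^2$ (already cited in the paper via Papadimitriou). The $\L \subseteq \NC^2$ inclusion follows because a deterministic log-space machine has polynomially many configurations, and the associated configuration graph can be handled in $\NC^2$; the $\NL \subseteq \NC^2$ inclusion is essentially the Savitch-style / transitive-closure argument, computing reachability in the configuration graph via $O(\log n)$ iterated squarings of a polynomial-size Boolean matrix, each implementable in $\NC^1$, giving total depth $O(\log^2 n)$. Granted these inclusions, any problem in $\L$ or $\NL$ lies in $\NC^2$, and thus inherits the $O(\log N)$-round MPC simulation from the first step.

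The only mild subtlety, and the step I would double-check, is uniformity: the simulation provided by \Cref{thm:circuit_simulation} is stated for logspace-uniform circuit families (as in \Cref{def:NC}), and one needs to confirm that the translations $\L \hookrightarrow \NC^2$ and $\NL \hookrightarrow \NC^2$ preserve logspace uniformity so that the resulting $\NC^2$ families are of the type to which \Cref{thm:circuit_simulation} applies. This is standard but worth a sentence, and it is the only real obstacle; no new combinatorial or algorithmic work is needed. Putting the three ingredients together in a single short paragraph---Frei--Wada at $i=1$, the inclusions $\L,\NL \subseteq \NC^2$, and a remark on uniformity---yields the corollary for every $\epsilon \in (0,1/2)$, matching the hypothesis on $\epsilon$ inherited from \Cref{thm:circuit_simulation}.
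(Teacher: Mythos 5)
Your proposal is correct and follows exactly the same route the paper takes: instantiate \Cref{thm:circuit_simulation} at $i=1$ to get $\NC^2 \subseteq \DMPC^1$, then conclude via the standard inclusions $\L \subseteq \NL \subseteq \NC^2$, which the paper cites just before stating the corollary. Your remark on checking that the embeddings $\L, \NL \hookrightarrow \NC^2$ preserve logspace uniformity is a reasonable point of care but is standard and implicitly assumed in the paper as well.
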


Corollary~\ref{cor:circuit_simulation2} implies that all the problems discussed in Appendix~\ref{sec:space}
(circuit evaluation, and perfect matching and equivalent problems excluded) can be solved in $O(\log N)$ MPC rounds.

\subsection{New Consequences of Circuit Simulations}

In this section we discuss new consequences of the fact that the MPC model is powerful enough to efficiently
simulate general classes of Boolean circuits.

\begin{theorem}\label{thm:L}
Consider the MPC model where the size of the local memory per machine is $O(N^{1-\epsilon})$ for some
constant $\epsilon \in (0,1/2]$, and assume that $\Omega(N^{2(1-\epsilon)})$ total memory is available.
Let $f \colon \mathbb N \to \mathbb R^+$ be a function. Then, if any $\L$-hard problem can be solved
in $O(f(N))$ MPC rounds, so can all the problems in the class $\L$. Moreover, either all $\L$-complete
problems can be solved in $O(f(N))$ MPC rounds, or none of them can. 
\end{theorem}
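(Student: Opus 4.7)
The plan is to combine Definition~\ref{def:L-complete}, which requires every language in $\L$ to be $\NC^1$-reducible to any $\L$-hard language, with Corollary~\ref{cor:circuit_simulation}, which says that $\NC^1$ circuits can be simulated in $O(1)$ MPC rounds using strongly sublinear local and sub-quadratic total memory. The first part of the theorem will follow in two steps: execute the $\NC^1$ reduction in MPC in $O(1)$ rounds, then invoke the hypothesised $O(f(N))$-round MPC algorithm for the chosen $\L$-hard target.

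In detail, let $B$ be an $\L$-hard problem with an $O(f(N))$-round MPC algorithm $\mathcal{A}_B$, and fix $A \in \L$ with input $I$ of size $N$. By Definition~\ref{def:L-complete} there is a logspace-uniform family of $\NC^1$ circuits that on input $I$ outputs an instance $I'$ of $B$ of size $N' = N^{O(1)}$. I simulate this circuit in $O(1)$ MPC rounds via Corollary~\ref{cor:circuit_simulation}, with the parameter $\epsilon'$ for the simulation chosen so that both (i) $\epsilon' \in (0,1/2]$, as required by the corollary, and (ii) $N'^{1-\epsilon'} = O(N^{1-\epsilon})$, following the parameter-scaling convention of Section~2.1; since $N' = \mathrm{poly}(N)$ this is always realisable. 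The total-memory hypothesis $\Omega(N^{2(1-\epsilon)})$ is, under the same scaling, $\Omega(N'^{2(1-\epsilon')})$, so Corollary~\ref{cor:circuit_simulation} applies. Once $I'$ is materialised I invoke $\mathcal{A}_B$ on $I'$, taking $O(f(N')) = O(f(N))$ rounds, where the last equality uses the implicit convention of the paper that the round-complexity functions of interest (e.g.\ $o(\log N)$ or polylogarithmic) are stable under polynomial reparameterisation. The total round count is $O(1) + O(f(N)) = O(f(N))$.

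For the second statement, observe that every $\L$-complete problem is both in $\L$ and $\L$-hard. Hence if one $\L$-complete problem admits an $O(f(N))$-round algorithm, the first part upgrades this to all of $\L$, and in particular to every other $\L$-complete problem; the contrapositive gives the other direction of the dichotomy.

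I expect the main obstacle to be mundane bookkeeping rather than a conceptual hurdle: one has to verify that a single choice of $\epsilon'$ simultaneously satisfies the $(0,1/2]$ window of Corollary~\ref{cor:circuit_simulation} and scales correctly to the original local and total memory budgets across the polynomial blow-up of the reduction. The argument fundamentally hinges on the fact that $\NC^1$ reducibility---rather than, say, polynomial-time reducibility---is the right notion for $\L$-completeness here, because only $\NC^1$ is cheap enough in MPC to be absorbed into the $O(f(N))$ bound without inflating it.
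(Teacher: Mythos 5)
Your proof takes essentially the same route as the paper's: invoke the definition of $\L$-hardness to obtain an $\NC^1$ reduction, simulate that reduction in $O(1)$ MPC rounds via the circuit-simulation corollary (Corollary~\ref{cor:circuit_simulation}), then run the assumed $O(f(N))$-round algorithm for the $\L$-hard target, and derive the dichotomy for $\L$-complete problems by combining membership and hardness. Your writeup is in fact more careful than the paper's rather terse proof about the choice of $\epsilon'$ under the polynomial size blow-up of the reduction and about the implicit requirement that $f$ be stable under polynomial reparameterisation of the input size---conditions the paper uses silently.
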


\begin{proof}
Both claims follow directly from the definitions of $\L$-hardness and $\L$-completeness, and from
Corollary~\ref{cor:circuit_simulation}. Let $A$ be an $\L$-hard problem that can be solved in $O(f(N))$
MPC rounds. By definition of $\L$-hardness, every problem in $\L$ is $\NC^1$ reducible to $A$. By
assumption, $\epsilon \in (0,1/2]$ and $\Omega(N^{2(1-\epsilon)})$ total memory is available, and thus,
by Corollary~\ref{cor:circuit_simulation}, an $\NC^1$ reduction can be simulated in $O(1)$ MPC rounds,
giving the first claim. Therefore, in particular, if any $\L$-complete problem can be solved in $O(f(N))$
MPC rounds, so can all the other $\L$-complete problems. In other words, either all $\L$-complete
problems can be solved in $O(f(N))$ MPC rounds, or none of them can.
\end{proof}

We remark that in Theorem~\ref{thm:L} no assumption is placed on the function $f(N)$, which therefore
can be of any form, even a constant. Hence, Theorem~\ref{thm:L} says that all the known $\L$-complete
problems such as graph connectivity, graph bipartiteness, cycle detection, and formula evaluation, are
\emph{equivalent} in the MPC model, and in a very strong sense: they all require asymptotically the same number
of rounds. (Analogous equivalences are common in computer science, e.g., in the theory of $\NP$-completeness
and, at a finer-grained level, in the recent fine-grained complexity theory, where equivalence classes of problems
within $\P$, such as the APSP class~\cite{VassilevskaW18,Vassilevska18}, are established.) Thus, this simple 
result provides an explanation for the striking phenomenon that for these well-studied problems we seem unable
to break the $O(\log N)$ barrier in the MPC model. It also implies that the conjectures on the hardness of graph
connectivity and on the hardness of the one cycle vs.\ two cycles problem are equivalent, at least when
$\Omega(N^{2(1-\epsilon)})$ total memory is available.

The next theorem provides an even stronger barrier for improvements in the MPC model.

\begin{theorem}\label{thm:NL}
Consider the MPC model where the size of the local memory per machine is $O(N^{1-\epsilon})$ for some
constant $\epsilon \in (0,1/2]$, and assume that $\Omega(N^{2(1-\epsilon)})$ total memory is available.
Let $f \colon \mathbb N \to \mathbb R^+$ be a function. If any $\L$-hard problem can be solved in
$O(f(N))$ MPC rounds, then either all $\NL$-complete problems can be solved in $O(f(N))$ MPC rounds,
or none of them can. Moreover, if any $\NL$-hard and any $\L$-hard problem can be solved in $O(f(N))$
MPC rounds, so can all the problems in the class $\NL$.
\end{theorem}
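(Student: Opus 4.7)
The plan is to bootstrap from the hypothesis of a fast algorithm for an $\L$-hard problem to the ability to simulate \emph{log-space reductions} inside the MPC model within the round budget $O(f(N))$. Once this is in hand, both claims follow from the defining reductions of $\NL$-completeness, exactly in the same shape as the proof of Theorem~\ref{thm:L}---only with log-space reductions in place of $\NC^1$ reductions.

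The key observation I would rely on is the standard fact that, for any log-space reduction $R \colon \Sigma^* \to \Sigma^*$, the output has length polynomial in $|x|$, and the bit-extraction predicate ``the $i$-th output bit of $R(x)$ equals $1$'' is itself decidable in logarithmic space, so it lies in $\L$. Hence, by Theorem~\ref{thm:L}, the assumption that some $\L$-hard problem can be solved in $O(f(N))$ rounds implies that every such bit-extraction predicate can be solved in $O(f(N))$ rounds. Running polynomially many of these bit-extraction instances in parallel on disjoint groups of machines---which is feasible under the polynomial-machines regime, since each instance has input size $O(N)$ and uses total memory $O(N^{2(1-\epsilon)})$---yields the entire string $R(x)$ in $O(f(N))$ rounds.

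With the log-space-reduction simulator available, I would conclude the two claims as follows. For the first, let $B_1$ and $B_2$ be any two $\NL$-complete problems, and assume $B_1$ admits an $O(f(N))$-round MPC algorithm. By $\NL$-completeness, $B_2$ reduces to $B_1$ via a log-space reduction $R$; compute $R(x)$ in $O(f(N))$ rounds using the simulator, and then invoke the $B_1$-algorithm on $R(x)$ in $O(f(N))$ additional rounds. This gives a symmetric implication between any two $\NL$-complete problems, establishing the all-or-none dichotomy. For the second, given $A \in \NL$ and an $\NL$-hard problem $B$ admitting an $O(f(N))$-round algorithm, reduce $A$ to $B$ log-space, compute the reduction using the simulator, and apply $B$'s algorithm; the total round complexity is again $O(f(N))$.

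The main technical point to be careful with is parameter tuning for the inner MPC call. The reduced instance $R(x)$ has size $N' = N^{c}$ for some constant $c$ depending only on $R$, and the invoked algorithm for the $\NL$-complete problem $B_1$ (or $B$) is stated for inputs of size $N'$ with local memory $O(N'^{1-\epsilon'})$ and total memory $\Omega(N'^{2(1-\epsilon')})$. One must pick $\epsilon' \in (0,1/2]$ with $c(1-\epsilon') \le 1-\epsilon$ so that $N'^{1-\epsilon'} = O(N^{1-\epsilon})$ matches the available per-machine memory, in the same spirit as the parameter-setting convention described in Section~2.1; the matching total-memory bound $N'^{2(1-\epsilon')} = O(N^{2(1-\epsilon)})$ then holds automatically. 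Verifying this parameter reconciliation, together with the equally straightforward reconciliation for the bit-extraction simulator, is the only nontrivial bookkeeping; the combinatorial content of the proof is entirely captured by the log-space-to-$\L$-bit-extraction observation combined with Theorem~\ref{thm:L}.
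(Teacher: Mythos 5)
Your proof follows essentially the same route as the paper's: invoke Theorem~\ref{thm:L} to place all of $\L$ in $O(f(N))$ MPC rounds, observe that this lets you compute log-space reductions within the round budget, and then chain with the $\NL$-completeness (resp.\ $\NL$-hardness) reductions. The one place where you are more careful than the paper is in justifying the step from ``every decision problem in $\L$ is solvable in $O(f(N))$ rounds'' to ``every log-space \emph{function} (in particular, every log-space reduction) can be computed in $O(f(N))$ rounds''; you fill this in explicitly via the $\L$-membership of the bit-extraction predicate and polynomially many parallel invocations, while the paper asserts it without elaboration.
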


\begin{proof}
Let $A$ be an $\L$-hard problem that can be solved in $O(f(N))$ MPC rounds. Then, by Theorem~\ref{thm:L}, 
every problem in the class $\L$ can be solved in $O(f(N))$ MPC rounds and thus, in particular, every log-space
reduction can be computed in $O(f(N))$ MPC rounds. By definition of $\NL$-completeness, every problem
in $\NL$, and thus, in particular, any $\NL$-complete problem, is log-space reducible to any other $\NL$-complete problem,
and this proves the first statement.

Let $B$ be an $\NL$-hard problem that can be solved in $O(f(N))$ MPC rounds. By definition of $\NL$-hardness,
every problem in $\NL$ is log-space reducible to $B$. Since we have just argued that if any $\L$-hard problem
can be solved in $O(f(N))$ MPC rounds, so can any log-space reduction, the second statement follows.
\end{proof}

Once again, we stress that in Theorem~\ref{thm:NL} no assumption is placed on the function $f(N)$,
which therefore can be of any form, even a constant.

Theorem~\ref{thm:NL} indicates that, unless $\L = \NL$, in the MPC model the connectivity problem
on directed graphs, which is both $\NL$-complete and $\L$-hard, is strictly harder than on undirected graphs
in the sense that breaking the current logarithmic barrier, if possible, would be strictly harder.

Notice that we also have the following weaker, but simpler to prove, result: if any problem $\NL$-complete under $\NC^1$
reductions (such as $st$-reachability) can be solved in $O(f(N))$ MPC rounds, so can all the problems in the class $\NL$.
This follows directly from the definition of $\NL$-completeness under $\NC^1$ reductions and from Corollary~\ref{cor:circuit_simulation}.
Notice also that the result in Theorem~\ref{thm:NL} can be extended with the same proof to complexity classes wider than $\NL$,
such as $\NC^2$ or $\P$, for which hardness is defined in terms of log-space reducibility as well.

\subsubsection{New Conjectures}
The common belief that problems such as graph connectivity and list ranking cannot be solved in $o(\log N)$ MPC rounds,
along with the equivalence result of Theorem~\ref{thm:L}, justify the following conjecture.

\begin{conjecture}\label{cnj:L}
No $\L$-hard problem can be solved in $o(\log N)$ MPC rounds with $O(N^{1-\epsilon})$ local memory per machine,
for any constant $\epsilon \in (0,1)$, not even with a polynomial amount of total memory. Equivalently,
\[
\L \nsubseteq \MPC(o(\log N)).
\]
\end{conjecture}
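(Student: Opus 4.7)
The statement consists of two halves linked by ``Equivalently,'' so the task is to justify that the two formulations---(a) no $\L$-hard problem lies in $\MPC(o(\log N))$, and (b) $\L \nsubseteq \MPC(o(\log N))$---say the same thing. The conjecture itself is, of course, not something one can hope to establish outright: proving it would in particular imply $\L \neq \NC^1$, a long-standing open problem in complexity theory. So the only content needing a proof is the equivalence of (a) and (b).

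The direction (a) $\Rightarrow$ (b) is essentially immediate. There exist problems that are simultaneously in $\L$ and $\L$-hard (that is, $\L$-complete); the one cycle vs.\ two cycles problem is one such example, as shown in Appendix~\ref{sec:space_1vs2cycles}. Any such problem belongs to $\L$, and by (a) it does not lie in $\MPC(o(\log N))$, which directly witnesses $\L \nsubseteq \MPC(o(\log N))$.

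For (b) $\Rightarrow$ (a), I would argue by contrapositive using Theorem~\ref{thm:L}. Suppose some $\L$-hard problem $A$ is solvable in $o(\log N)$ MPC rounds with $O(N^{1-\epsilon})$ local memory per machine and polynomial total memory, for some $\epsilon \in (0,1)$. Without loss of generality we may take $\epsilon \leq 1/2$, because enlarging the local memory available to an MPC algorithm never breaks it; meanwhile, any polynomial total-memory bound trivially dominates the $\Omega(N^{2(1-\epsilon)})$ requirement of Theorem~\ref{thm:L}. Invoking that theorem with $f(N) = o(\log N)$ then lifts the $o(\log N)$ round bound from $A$ to every problem in $\L$, giving $\L \subseteq \MPC(o(\log N))$ and contradicting (b).

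The real content of the reverse direction is packaged entirely in Theorem~\ref{thm:L}, which in turn rests on the $\NC^1$-reduction simulation of Corollary~\ref{cor:circuit_simulation}; the only mildly delicate point in my plan is the parameter-matching argument for $\epsilon$ and the total memory described above, and both discrepancies happen to go in the favorable direction. I expect no other obstacle, since once the circuit simulation machinery is in hand the equivalence reduces to pure bookkeeping.
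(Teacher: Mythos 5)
Your argument is correct and follows essentially the same route as the paper's own proof of the accompanying Proposition: the (a)$\Rightarrow$(b) direction uses the existence of an $\L$-complete problem, and the (b)$\Rightarrow$(a) direction is the contrapositive via Theorem~\ref{thm:L}. The parameter-matching detail you flag (reducing to $\epsilon \le 1/2$ and noting that a polynomial total-memory bound dominates $\Omega(N^{2(1-\epsilon)})$) is a minor point the paper leaves implicit, and you handle it correctly.
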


We now show the claimed equivalence.

\begin{proposition}
The two statements in Conjecture~\ref{cnj:L} are equivalent.
\end{proposition}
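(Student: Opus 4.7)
The plan is to prove the biconditional by establishing each direction separately. Let (A) denote the first statement of Conjecture~\ref{cnj:L} (``no $\L$-hard problem lies in $\MPC(o(\log N))$'') and let (B) denote the second, $\L \nsubseteq \MPC(o(\log N))$.

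For (A) $\Rightarrow$ (B) I would simply invoke the existence of $\L$-complete problems. Any $\L$-complete language, for instance graph connectivity or the one cycle vs.\ two cycles problem shown to be $\L$-complete in Appendix~\ref{sec:space_1vs2cycles}, is simultaneously a member of $\L$ and $\L$-hard. Under hypothesis (A) such a language cannot belong to $\MPC(o(\log N))$, and its membership in $\L$ therefore witnesses $\L \nsubseteq \MPC(o(\log N))$.

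For (B) $\Rightarrow$ (A) I would argue by contrapositive. Suppose some $\L$-hard language $A$ lies in $\MPC(o(\log N))$, so by definition $A$ admits an $o(\log N)$-round MPC algorithm using $O(N^{1-\epsilon})$ local memory per machine and polynomially many machines, for some constant $\epsilon \in (0,1)$. I would then apply Theorem~\ref{thm:L} with $f(N) = o(\log N)$ to conclude that every language in $\L$ belongs to $\MPC(o(\log N))$, contradicting (B).

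The only step that requires care is reconciling parameter ranges: Theorem~\ref{thm:L} presupposes $\epsilon \in (0,1/2]$ and $\Omega(N^{2(1-\epsilon)})$ total memory, whereas Conjecture~\ref{cnj:L} only requires $\epsilon \in (0,1)$ and a polynomial amount of total memory. I would dispatch this by observing that any algorithm running with local memory $O(N^{1-\epsilon})$ for $\epsilon > 1/2$ runs unchanged under the larger bound $O(N^{1/2})$ (simply ignore the extra space), so without loss of generality $\epsilon \leq 1/2$; and that the total memory can be inflated to $\Omega(N^{2(1-\epsilon)})$ by activating additional otherwise-idle machines, which is compatible with the polynomial-total-memory regime of $\MPC(o(\log N))$. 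With these adjustments the hypotheses of Theorem~\ref{thm:L} are met and the contradiction goes through. The main difficulty, such as it is, lies entirely in this parameter bookkeeping; the rest is a direct invocation of Theorem~\ref{thm:L} combined with the existence of $\L$-complete languages.
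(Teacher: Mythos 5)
Your argument is essentially the paper's own: the forward direction is witnessed by any $\L$-complete language (the paper phrases it contrapositively, but it is the same observation), and the reverse direction goes through Theorem~\ref{thm:L} by contraposition exactly as you describe. The parameter-range reconciliation you add is a correct and legitimate point that the paper's printed proof elides; your handling (shrinking $\epsilon$ to at most $1/2$ by ignoring spare local memory, and inflating total memory with idle machines within the polynomial regime) is sound.
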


\begin{proof}
We shall argue that if any of the two statements is wrong, so is the other, and vice versa. Assume $\L \subseteq \MPC(o(\log N))$.
Then, some $\L$-complete, and hence $\L$-hard, problem is contained in $\MPC(o(\log N))$, that is, it can be solved in
$o(\log N)$ MPC rounds. To show the other direction, assume that there exists an $\L$-hard problem that can be solved
in $o(\log N)$ MPC rounds with a polynomial amount of total memory. Then, by Theorem~\ref{thm:L}, every problem
in $\L$ can be solved in $o(\log N)$ MPC rounds, i.e., $\L \subseteq \MPC(o(\log N))$.
\end{proof}

We would like to remark that, in light of Theorem~\ref{thm:L}, Conjecture~\ref{cnj:L} is totally equivalent to the
preceding conjectures on the hardness of graph connectivity or of the one cycle vs.\ two cycles problem~\cite{KarloffSV10,RastogiMCS13,BeameKS17,RoughgardenVW18,YaroslavtsevV18};
however, Theorem~\ref{thm:L} significantly strengthens the evidence for such conjectures.

Likewise, Theorem~\ref{thm:NL} provides a justification for the following conjecture.

\begin{conjecture}\label{cnj:NL}
No $\NL$-hard and $\L$-hard problem can be solved in $o(\log N)$ MPC rounds with $O(N^{1-\epsilon})$ local memory
per machine, for any constant $\epsilon \in (0,1)$, not even with a polynomial amount of total memory. Equivalently,
\[
\NL \nsubseteq \MPC(o(\log N)).
\]
\end{conjecture}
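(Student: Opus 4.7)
The plan is to prove the equivalence in Conjecture~\ref{cnj:NL} by imitating the proposition that follows Conjecture~\ref{cnj:L}, but invoking Theorem~\ref{thm:NL} in place of Theorem~\ref{thm:L}, arguing the two contrapositives separately.

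For the direction $\NL \subseteq \MPC(o(\log N)) \Rightarrow \neg(\text{first statement})$, I would exhibit a single problem $P$ that is simultaneously $\NL$-hard under log-space reductions and $\L$-hard under $\NC^1$ reductions. The canonical $\NL$-complete problem $st$-reachability works, since it is $\NL$-complete under even first-order (hence $\NC^1$) reductions, and because $\L \subseteq \NL$ this automatically makes it $\L$-hard as well. Such a $P$ lies in $\NL \subseteq \MPC(o(\log N))$ and therefore furnishes an $\NL$-hard and $\L$-hard problem solvable in $o(\log N)$ MPC rounds, falsifying the first statement. For the converse direction, I would suppose that some $\NL$-hard and $\L$-hard problem $P$ is solvable in $o(\log N)$ MPC rounds with polynomial total memory, and apply Theorem~\ref{thm:NL} with $f(N) = o(\log N)$, using $P$ itself to witness both the $\L$-hard and the $\NL$-hard problem in the hypothesis; the conclusion is $\NL \subseteq \MPC(o(\log N))$, falsifying the second statement.

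The only real obstacle will be reconciling the memory regimes: the conjecture allows any $\epsilon \in (0,1)$ and a polynomial amount of total memory, whereas Theorem~\ref{thm:NL} demands $\epsilon \in (0,1/2]$ and $\Omega(N^{2(1-\epsilon)})$ total memory. I would handle this exactly as in the preceding proposition: an algorithm using $O(N^{1-\epsilon})$ local memory can always be run with local memory budget $O(N^{1-\epsilon'})$ for $\epsilon' = \min\{\epsilon, 1/2\}$, and any additional memory that Theorem~\ref{thm:NL} requires can be supplied by idle machines while keeping the total polynomial. Beyond this bookkeeping, the argument is an immediate unpacking of Theorem~\ref{thm:NL} together with the existence of a problem that is both $\NL$-complete and $\L$-hard, so I do not anticipate any deeper difficulty.
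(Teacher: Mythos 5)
Your proof is correct and follows the same route as the paper: the paper likewise uses $st$-reachability---noting it is both $\NL$-hard and $\L$-hard---to refute the first statement from $\NL \subseteq \MPC(o(\log N))$, and applies Theorem~\ref{thm:NL} with $f(N) = o(\log N)$ for the converse. The $\epsilon$ and total-memory bookkeeping you flag is a genuine (if routine) point that the paper leaves implicit, and your resolution via $\epsilon' = \min\{\epsilon, 1/2\}$ is the right one.
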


We now show the claimed equivalence.

\begin{proposition}
The two statements in Conjecture~\ref{cnj:NL} are equivalent.
\end{proposition}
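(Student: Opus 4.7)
The plan is to mirror the structure of the preceding proposition, establishing the two implications between the two formulations of Conjecture~\ref{cnj:NL} and leaning on Theorem~\ref{thm:NL} for the nontrivial direction.

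First I would handle the direction ``$\NL \subseteq \MPC(o(\log N))$ implies the first formulation fails.'' For this it suffices to exhibit a single witness problem that is simultaneously $\NL$-hard (under log-space reductions) and $\L$-hard (under $\NC^1$ reductions) and that lies in $\NL$. The canonical choice is directed $st$-reachability: it is well-known to be $\NL$-complete even under first-order reductions, and a fortiori under log-space and $\NC^1$ reductions. In particular, since $\L \subseteq \NL$, every problem in $\L$ reduces to $st$-reachability via an $\NC^1$ reduction, so $st$-reachability is $\L$-hard in the sense of Definition~\ref{def:L-complete}. Assuming $\NL \subseteq \MPC(o(\log N))$, this problem would belong to $\MPC(o(\log N))$, contradicting the first statement of Conjecture~\ref{cnj:NL}.

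Second, for the converse direction I would invoke Theorem~\ref{thm:NL} essentially as a black box. Suppose that some problem $B$ that is both $\NL$-hard and $\L$-hard admits an $o(\log N)$-round MPC algorithm with $O(N^{1-\epsilon})$ local memory and polynomial total memory. Setting $f(N) = o(\log N)$ (which is allowed since Theorem~\ref{thm:NL} places no restriction on $f$) and observing that a polynomial total memory budget comfortably covers the $\Omega(N^{2(1-\epsilon)})$ needed to apply the theorem, we conclude that every problem in $\NL$ can be solved in $o(\log N)$ MPC rounds. This is precisely $\NL \subseteq \MPC(o(\log N))$.

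The main obstacle—and it is a very mild one—is to record that the naming conventions of $\L$-hardness (under $\NC^1$ reductions) and $\NL$-hardness (under log-space reductions) do not prevent a single problem from satisfying both simultaneously; once this is observed via $st$-reachability, both implications reduce to unpacking definitions and a single application of Theorem~\ref{thm:NL}, so no further machinery is required.
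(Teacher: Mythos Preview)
Your proposal is correct and follows essentially the same approach as the paper: both directions are handled identically, using $st$-reachability as the witness that is simultaneously $\NL$-hard and $\L$-hard for one implication, and invoking Theorem~\ref{thm:NL} as a black box for the converse. Your write-up merely adds a touch more justification for the $\L$-hardness of $st$-reachability via first-order reductions, which the paper states without elaboration.
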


\begin{proof}
We shall argue that if any of the two statements is wrong, so is the other, and vice versa. Assume $\NL \subseteq \MPC(o(\log N))$.
Then, in particular, $st$-reachability can be solved in $o(\log N)$ MPC rounds. Since $st$-reachability is both $\NL$-hard
and $\L$-hard, this contradicts the first statement. To show the other direction, assume that there exists an $\NL$-hard
and $\L$-hard problem that can be solved in $o(\log N)$ MPC rounds with a polynomial amount of total memory. Then, by
Theorem~\ref{thm:NL}, every problem in $\NL$ can be solved in $o(\log N)$ MPC rounds, i.e., $\NL \subseteq \MPC(o(\log N))$.
\end{proof}

Figure~\ref{fig:conjectures} depicts the conjectured relationships among $\L$, $\NL$, and $\MPC(o(\log N))$.
Observe that since $\L \subseteq \NL$, Conjecture~\ref{cnj:L} implies Conjecture~\ref{cnj:NL}. Hence, unless $\L = \NL$,
Conjecture~\ref{cnj:NL} is weaker than Conjecture~\ref{cnj:L}, and thus more likely to be true.

\begin{figure}
    \begin{center}
       \includegraphics[width=0.65\textwidth]{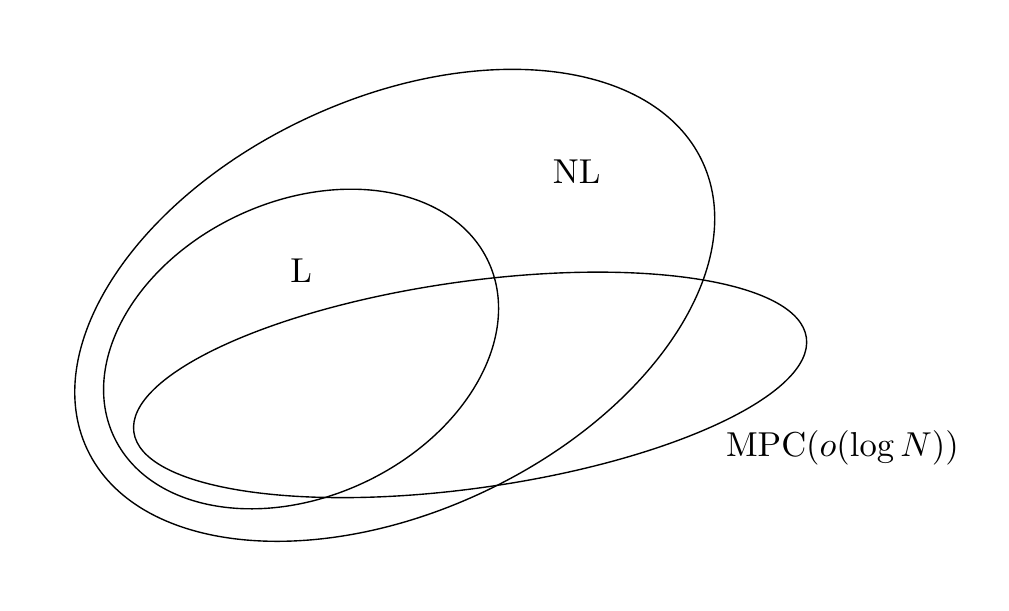}
       \caption{Conjectured relationships among classes $\L$, $\NL$, and $\MPC(o(\log N))$.}\label{fig:conjectures}
    \end{center}
\end{figure}

We stress that breaking either conjecture would have vast consequences because of the large number of fundamental
problems contained in $\L$ and $\NL$. This is somewhat in contrast, e.g., to the Strong Exponential Time Hypothesis
(SETH), a popular hardness assumption on the complexity of $k$-SAT used to prove a plethora of conditional lower
bounds, especially in the realm of polynomial-time algorithms~\cite{Vassilevska18}, whose refutation would have
more limited algorithmic consequences.

These two conjectures can be used as a base for conditional lower bounds in the MPC model, in the same
way as the one cycle vs.\ two cycles conjecture was used as a hardness assumption
in~\cite{AndoniNOY14,YaroslavtsevV18,AndoniSZ19,LackiMOS,GhaffariKU19,BehnezhadDELM19}.

\section{Reductions and Equivalences in Massively Parallel Computations}\label{sec:reductions}

In this section we discuss two equivalence classes of problems and some conditional lower bounds in the MPC model.
The two equivalence classes both contain problems equivalent to each other under $O(1)$-round MPC reductions
and for which the best known upper bound is $O(\log N)$ rounds, but differ in terms of the low-space computational
complexity characterization of the problems they contain.

As a consequence of the results of Section~\ref{sec:MPC-Space}, most of these reductions and  equivalences follow from
known hardness and completeness results for low-space complexity classes such as $\L$ and $\NL$.

We will also show novel reductions and equivalences in the MPC model. Some of such reductions crucially require the
availability of up to polynomially many machines (equivalently, a total amount of memory up to polynomial in the input size),
which are used to host up to a polynomial number of copies of the input data. The quick creation of so many input replicas
can be achieved through the use of a simple two-step broadcast procedure, as shown in the following lemma.

\begin{lemma}\label{lem:replication}
The input data can be replicated up to a polynomial number of times in $O(1)$ MPC rounds.
\end{lemma}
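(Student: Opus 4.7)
The plan is to exhibit a two-round ``disperse-and-collect'' building block that multiplies the current number of copies of the input by a factor of $s = N^{1-\epsilon}$, and then to iterate this block a constant number of times to reach any prescribed polynomial replication factor $N^{c}$.

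I assume the $N$ input words are initially stored on $N/s = N^{\epsilon}$ source machines, each holding a distinct chunk of $s$ consecutive words. I will argue inductively that, whenever the current configuration has $c_1$ full copies laid out as $c_1 \cdot N/s$ chunk-holding machines, two further rounds produce $\Theta(s \cdot c_1)$ full copies. Iterating this block $r := \lceil c/(1-\epsilon) \rceil$ times, starting from $c_1 = 1$, then yields $s^{r} \geq N^{c}$ copies in $2r = O(1)$ rounds; the polynomial total memory (equivalently, polynomially many machines) assumed in the paper ensures that enough fresh ``helper'' and ``target'' machines exist at every iteration.

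For the building block, in the first round each chunk-holder disperses its chunk by sending its $j$-th word to the $j$-th of $s$ freshly allocated helper machines, for $j = 1, \ldots, s$. Each chunk-holder thus sends exactly $s$ words, and each helper receives exactly one word, so the per-machine budget is respected. After this round, for every chunk and every pre-existing copy there is a group of $s$ helpers that collectively hold the chunk, one word per helper. In the second round each such group broadcasts to a bank of $s$ freshly allocated target machines: each helper sends its single word to all $s$ targets in the bank, using its full outgoing budget of $s$, while each target receives one word from each of the $s$ helpers and thereby assembles an entire chunk. Summed across all chunks and all pre-existing copies, this produces $s \cdot c_1$ new full copies, establishing the inductive step.

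The step requiring the most care, though I do not expect it to be a genuine obstacle, is to fix a canonical static labeling of helpers and targets -- indexed, say, by the tuple (iteration number, copy index, chunk index, intra-chunk position) -- so that each sender and each receiver can compute its partners locally without any preliminary coordination. Subject to this routine bookkeeping, the full construction uses $O(1)$ rounds, honors the per-round communication and per-machine storage bounds at every step, and fits within the allotted polynomial total memory.
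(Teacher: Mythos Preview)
Your proposal is correct and follows essentially the same two-step scheme as the paper: disperse each chunk word-by-word to $s = \Theta(N^{1-\epsilon})$ helper machines, have each helper broadcast its single word to $s$ machines so that each ends up with a full chunk, and iterate a constant number of times to reach any polynomial replication factor. The only cosmetic difference is that the paper reuses the helper machines themselves as the targets of the second-round broadcast (they do an all-to-all within each group), whereas you allocate a fresh bank of target machines; this costs you a constant factor in machines but is otherwise immaterial.
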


\begin{proof}
Assume, without loss of generality, that initially all the input data is held by the first $\beta$ consecutively numbered machines.
We use a basic two-step broadcast procedure to replicate the contents of each machine in $O(N^{1-\epsilon})$ other machines.
Any polynomial-factor replication can thus be achieved by repeating the procedure a constant amount of times.

Let $c$ be a sufficiently large positive constant. For each $i \in [\beta]$, machine $i$ logically partitions its memory contents in
$cN^{1-\epsilon}$ parts, one for each word. Then, machine $i$ sends the $j$-th word to the $(\beta + (i-1)cN^{1-\epsilon} + j)$-th
machine. Finally, each of these machines broadcasts the word received from machine $i$ to all the other machines in the range
$\beta + (i-1)cN^{1-\epsilon} + 1, \dots, \beta + icN^{1-\epsilon}$, thus yielding a factor-$cN^{1-\epsilon}$ replication
of the contents of each machine.
\end{proof}

\subsection{An Equivalence Class for Undirected Graph Connectivity}

In this section we discuss the MPC equivalence class for graph connectivity in undirected graphs. This problem, which
asks to determine whether a given undirected graph is connected or not, was one of the first problems to be shown
$\L$-hard under (uniform) $\NC^1$ reductions~\cite{CookM87}, and then it was placed in $\L$ by the remarkable
algorithm of Reingold~\cite{Reingold08}. Exploiting the results of Section~\ref{sec:MPC-Space}, we know that one can
recycle all the reductions that have been developed in classical complexity theory for showing hardness and completeness
for class $\L$ in the MPC model as well, since these can all be simulated in $O(1)$ MPC rounds with $O(N^{2(1-\epsilon)})$
total memory. This immediately implies that the class of $\L$-complete problems forms an equivalence class
in the MPC model as well. Specifically, for example, either all the following problems can be solved with a sublogarithmic
MPC algorithm, or none of them can:\footnote{See Appendix~\ref{sec:space} for precise definitions of the various problems,
as well as for references.}
graph connectivity,
connectivity for promise graphs that are a disjoint union of cycles,
$st$-connectivity,
$st$-reachability for directed graphs of out-degree one,
cycle detection,
order between vertices,
formula evaluation,
and planarity testing.

\paragraph{Recycling (some) old $\SL$-completeness results.}
Many more problems can be placed in this MPC equivalence class almost effortlessly: this is the case for some problems
complete for the class \emph{symmetric logarithmic space} ($\SL$), a class defined by Lewis and Papadimitriou~\cite{LewisP82}
to capture the complexity of undirected $st$-connectivity before this was eventually settled by the breakthrough of Reingold.
Completeness in $\SL$ is defined in terms of log-space reductions, and $st$-connectivity is one complete problem for it.
Since $\L \subseteq \SL$, Reingold's algorithm made these two classes collapse, thus widening the class $\L$ with many
new problems. However, completeness for $\SL$ does not translate into completeness for $\L$, since the latter is defined
in terms of a lower-level kind of reduction. Luckily, some of the log-space reductions devised to show hardness for $\SL$
turn out to be actually stronger than log-space. This is the case, e.g., of testing whether a given graph is bipartite (or,
equivalently, $2$-colorable), as we show next.

\begin{lemma}
Graph bipartiteness is equivalent to $st$-connectivity under $O(1)$-round MPC reductions, with $O(n^{1-\epsilon})$
local memory per machine for some constant $\epsilon \in (0,1)$, and $O(n(n+m))$ total memory.
\end{lemma}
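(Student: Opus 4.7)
The plan is to establish two $O(1)$-round MPC reductions, one in each direction, and then verify that the memory bound $O(n(n+m))$ suffices for both. Both reductions are essentially the classical log-space reductions from the $\SL$-completeness literature, but we must check that each can be implemented locally enough to run in $O(1)$ MPC rounds in the strongly sublinear memory regime, invoking \cref{lem:replication} when multiple copies of the input are needed.

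For the direction bipartiteness $\le$ $st$-connectivity, I would use the bipartite double cover: given $G=(V,E)$, build $G'$ on vertex set $V \times \{0,1\}$ with an edge between $(u,0)$ and $(v,1)$ (and between $(u,1)$ and $(v,0)$) for every $\{u,v\} \in E$. A standard argument shows that $G$ is bipartite iff, for every $v \in V$, the vertices $(v,0)$ and $(v,1)$ lie in different components of $G'$. The construction of $G'$ is a local relabeling that each machine can perform on its own edges in $O(1)$ rounds. Then I would invoke \cref{lem:replication} to produce $n$ copies of $G'$, on which $n$ independent $st$-connectivity calls are solved in parallel (one for each pair $(v,0),(v,1)$). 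Aggregating the Boolean answers with a standard $O(1)$-round AND across machines yields the bipartiteness verdict. The total memory is dominated by the $n$ parallel instances, each of size $O(n+m)$, matching the $O(n(n+m))$ budget.

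For the reverse direction $st$-connectivity $\le$ bipartiteness, I would subdivide every edge of $G$ by inserting a fresh midpoint vertex, obtaining a graph $H_0$ that is automatically bipartite (original vertices on one side, midpoints on the other), and then add the single extra edge $\{s,t\}$ to form $H$. The claim is that $H$ is non-bipartite iff $s$ and $t$ are connected in $G$. Indeed, a path from $s$ to $t$ in $G$ becomes an even-length path in $H_0$, so closing it with the $\{s,t\}$ edge produces an odd cycle; conversely, if $s$ and $t$ are in distinct components of $G$, one can independently $2$-color each component of $H_0$ and flip the coloring in $t$'s component so that $s$ and $t$ receive opposite colors, certifying bipartiteness of $H$. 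The construction is again entirely local: each machine holding an edge $\{u,v\}$ creates a midpoint with a canonical name (say, determined by an ordering of endpoints), and the single extra edge $\{s,t\}$ is trivial. This runs in $O(1)$ rounds and uses only $O(n+m)$ memory, well within the stated budget. A single call to a bipartiteness oracle on $H$ then answers $st$-connectivity on $G$.

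The main conceptual obstacle is not the combinatorics of either reduction, which are classical, but the model-level bookkeeping: one must guarantee that a single machine does not attempt to store all of $G'$ (or $H$), that the $n$ parallel $st$-connectivity instances in the first reduction are distributed cleanly across machines so each instance has its own dedicated block of $O(n+m)$ memory, and that the resulting answers can be OR'ed/AND'ed back in $O(1)$ rounds. This is precisely what \cref{lem:replication} buys us, together with the standard observation that an aggregation tree of depth $O(1)$ fits within the $O(n^{1-\epsilon})$ per-machine budget once $\Omega(n(n+m))$ total memory is available. Once these bookkeeping details are in place, the two reductions compose to give the claimed $O(1)$-round MPC equivalence.
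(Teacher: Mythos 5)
Your proof is correct and follows the same two-step plan as the paper (a separate $O(1)$-round MPC reduction in each direction, with \cref{lem:replication} absorbing the replication cost), but your constructions differ in detail and are in one place noticeably simpler. For bipartiteness $\le$ $st$-connectivity you use the same bipartite double cover $G'$, but you decide bipartiteness with $n$ parallel calls to an $st$-connectivity oracle (one per pair $(v,0),(v,1)$) followed by a Boolean aggregation, whereas the paper folds all $n$ tests into a \emph{single} $st$-connectivity query by replicating $G'$ $n$ times and attaching two super-nodes $s$ and $t$, with $s$ wired to the $0$-copy and $t$ to the $1$-copy of the $i$-th node in replica $i$. Both are valid non-adaptive $O(1)$-round reductions using $\BO{n(n+m)}$ memory; the paper's single-call form is slightly cleaner, yours more transparently exhibits the $n$ connectivity subproblems.

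For $st$-connectivity $\le$ bipartiteness your construction is genuinely different and simpler than the paper's. You subdivide every edge (so the subdivision $H_0$ is automatically bipartite with $V$ on one side and midpoints on the other) and then add the single edge $\{s,t\}$, arguing that this creates an odd cycle iff $s$ and $t$ are connected, and that otherwise the two components of $s$ and $t$ in $H_0$ can be $2$-colored with opposite parities to certify bipartiteness. The paper instead builds \emph{two} disjoint subdivided copies of $G$, connects $s$ to $s'$, and joins $t$ and $t'$ through an auxiliary node $w$, making the odd cycle wrap around both copies. Both constructions are $O(1)$-round and use $\BO{n+m}$ memory for this direction (the $\BO{n(n+m)}$ in the statement is driven by the double-cover direction). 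Your single-copy gadget achieves the same thing with fewer nodes and a shorter correctness argument, so it is arguably a small improvement over the paper's presentation.
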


\begin{proof}
Jones et al.~\cite{JonesLL76} showed that testing whether a graph is non-bipartite is equivalent to $st$-connectivity under
log-space reductions. We will now argue that both reductions can be simulated in $O(1)$ MPC rounds.

We start by showing that $st$-connectivity reduces to graph bipartiteness in $O(1)$ MPC rounds. The idea is to make use
of the fact that a graph is bipartite if and only if it has no cycle of odd length. Given an instance $G = (V,E)$, $s$, and $t$
of $st$-connectivity, we build a new graph $G' = (V',E')$ where
\[
V' = \{u, u' : u \in V\} \cup \{e, e' : e \in E\} \cup \{w : w \notin V \cup E\}
\]
and
\[
E' = \{\{u, e\}, \{e, v\}, \{u', e'\}, \{e', v'\} : e = \{u, v\} \in E\} \cup \{\{s, s'\}, \{t, w\}, \{t', w\}\}.
\]
Then observe that $G'$ contains an odd length cycle if and only if $s$ is connected to $t$ in $G$.
Nodes and edges of $G'$ can be easily generated in $O(1)$ rounds, and stored with $O(n(n+m))$ total memory.
Since $\vert V' \vert = O(n^2)$, when working with $G'$ the size of the local memory is set to $n^{2(1-\epsilon')}$
where $\epsilon' \in (0,1)$ is a constant such that $n^{2(1-\epsilon')} = O(n^{1-\epsilon})$. Then, an $O(f(n))$-round
algorithm for graph bipartiteness translates into an $O(f(n))$-round algorithm for $st$-connectivity.

We now show that graph bipartiteness reduces to $st$-connectivity in $O(1)$ MPC rounds. Given an instance $G = (V,E)$,
the idea is to construct a new graph by creating two copies of each node, call them copy 0 and copy 1, and then
for any edge ${u, v} \in E$, connecting the 0 copy of $u$ to the 1 copy of $v$ and vice versa. This can be trivially done
in $O(1)$ MPC rounds. This new graph, $G'$, is not bipartite if and only if there is some node $w$ such that the 0 copy
of $w$ is reachable from the 1 copy of $w$. To take care of the phrase ``there is some node $w$'', $n$ copies of $G'$ are
created and new nodes $s$ and $t$ are introduced. Then $s$ (resp., $t$) is connected to the 0 (resp., 1) copy of the $i$-th
node in copy $i$. By Lemma~\ref{lem:replication}, this can be accomplished in $O(1)$ MPC rounds as well.
\end{proof}

A good source of problems complete for $\SL$ is~\cite{AlvarezG00}.

\paragraph{From decision to non-decision problems.}
Complexity classes such as $\L$ contain problems phrased as decision problems. Nevertheless, it is often easy to
transform them into their non-decision version. As an example, consider \emph{order between vertices} (ORD). ORD is the decision
version of list ranking, the problem of obtaining a total ordering from a given successor relation~\cite{Etessami97}.
It is easy to argue the following equivalence.

\begin{lemma}
List ranking is equivalent to order between vertices under $O(1)$-round MPC reductions, with $O(n^{1-\epsilon})$
local memory per machine for some constant $\epsilon \in (0,1)$, and $O(n^3)$ total memory.
\end{lemma}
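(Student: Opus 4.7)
The plan is to establish the two directions of the equivalence separately, in the same style as the preceding bipartiteness/$st$-connectivity lemma.

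\textbf{Order between vertices reduces to list ranking.} Given an ORD instance consisting of a successor relation on $n$ vertices together with two query vertices $u,v$, we invoke a list ranking algorithm on the input to obtain the rank $r(w) \in \{0,1,\ldots,n-1\}$ of every vertex $w$, and then output ``yes'' iff $r(u) < r(v)$. The only post-processing is a single comparison, trivially carried out in $O(1)$ MPC rounds once the two ranks are gathered on a single machine. Since the input has size $O(n)$ and storing all ranks uses $O(n)$ words, no memory beyond what the list ranking algorithm itself uses is required, and an $f(n)$-round algorithm for list ranking yields an $O(f(n))$-round algorithm for ORD.

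\textbf{List ranking reduces to ORD.} Given a successor relation on $n$ vertices, the plan is to issue an ORD query for every ordered pair of distinct vertices in parallel, and then aggregate the outcomes into ranks. Concretely, for each of the $n(n-1)$ ordered pairs $(u,v)$, we create a dedicated copy of the input successor relation (of size $O(n)$) and invoke ORD on $(u,v)$. By Lemma~\ref{lem:replication} the $n(n-1)$ copies can be produced in $O(1)$ MPC rounds using $O(n^3)$ total memory in aggregate; each invocation then runs on its own disjoint pool of machines, with the local memory parameter set to a constant $\epsilon' \in (0,1)$ chosen so that $n^{1-\epsilon'} = O(n^{1-\epsilon})$, respecting the per-machine budget. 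Once the $n(n-1)$ answer bits are available, the rank of vertex $w$ is simply the number of $v \neq w$ such that $v$ precedes $w$, and computing this count for all $w$ simultaneously is a standard $O(1)$-round MPC aggregation given polynomial total memory. Hence a $g(n)$-round algorithm for ORD yields an $O(g(n))$-round algorithm for list ranking.

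\textbf{Main obstacle.} The delicate point is the second direction: each of the $n(n-1)$ parallel ORD invocations must have its own copy of the input, so $\Omega(n^3)$ total memory is genuinely necessary. Lemma~\ref{lem:replication} is precisely what makes this replication possible in $O(1)$ rounds, after which the sub-instances proceed completely independently. With this replication step in hand, both reductions incur only constant round overhead, establishing the asymptotic equivalence between list ranking and order between vertices in the MPC model.
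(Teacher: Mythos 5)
Your proof is correct and follows essentially the same route as the paper: a trivial reduction from ORD to list ranking, and a reduction from list ranking to ORD that replicates the input $O(n^2)$ times via the replication lemma, solves all pairwise ORD queries in parallel, and recovers each vertex's rank by counting. The only cosmetic differences are that the paper uses $\binom{n}{2}$ rather than $n(n-1)$ replicas and cites Chandra et al.\ plus the circuit-simulation corollary for the final $O(1)$-round summation step.
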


\begin{proof}
Order between vertices trivially reduces to list ranking. We now argue that list ranking is reducible under $O(1)$-round MPC
reductions to ORD when there are polynomially many available machines. The reduction is as follows: (1) create $\binom{n}{2}$
replicas of the $n$ inputs across the machines; by Lemma~\ref{lem:replication} this takes $O(1)$ MPC rounds; (2) in parallel,
solve ORD for each pair of nodes, one pair for each input replica; (3) each of $n$ designated machines outputs the rank of a
distinct node $u$ by counting the number of yes/no outputs for ORD for the pair $(u,v)$, for each $v \neq u$: doing this is
tantamount to doing summation, which can be done in $O(1)$ MPC rounds by~\cite{ChandraSV84} and Corollary~\ref{cor:circuit_simulation}.
\end{proof}

\paragraph{Non-pairwise reductions.}
Sometimes back-and-forth reductions between two problems are not known. In this case their equivalence may nevertheless
be established through a series of reductions involving related problems. As an example, we now show that a bunch of problems
related to graph connectivity are all equivalent under $O(1)$-round MPC reductions. Besides graph connectivity and $st$-connectivity,
these are determining the connected components of an undirected graph, counting the number of connected components (\# connected 
components), finding a minimum-weight spanning forest (MSF), and finding a minimum cut. See Figure~\ref{fig:connectivity}. Recall
that a connected component of an undirected graph is a maximal set of nodes such that each pair of nodes is connected by a path,
and it is usually represented by a labeling of nodes such that two nodes have the same label if and only if they are in the same
connected component. A minimum spanning forest of a weighted graph is the union of the minimum spanning trees for its
connected components. In the minimum cut problem we have to find a partition of the nodes of a graph into two disjoint sets
$V_1, V_2 = V \setminus V_1$ such that the set of edges that have exactly one endpoint in $V_1$ and exactly one endpoint in
$V_2$ is as small as possible. 

\begin{figure}
    \begin{center}
       \includegraphics[width=\textwidth]{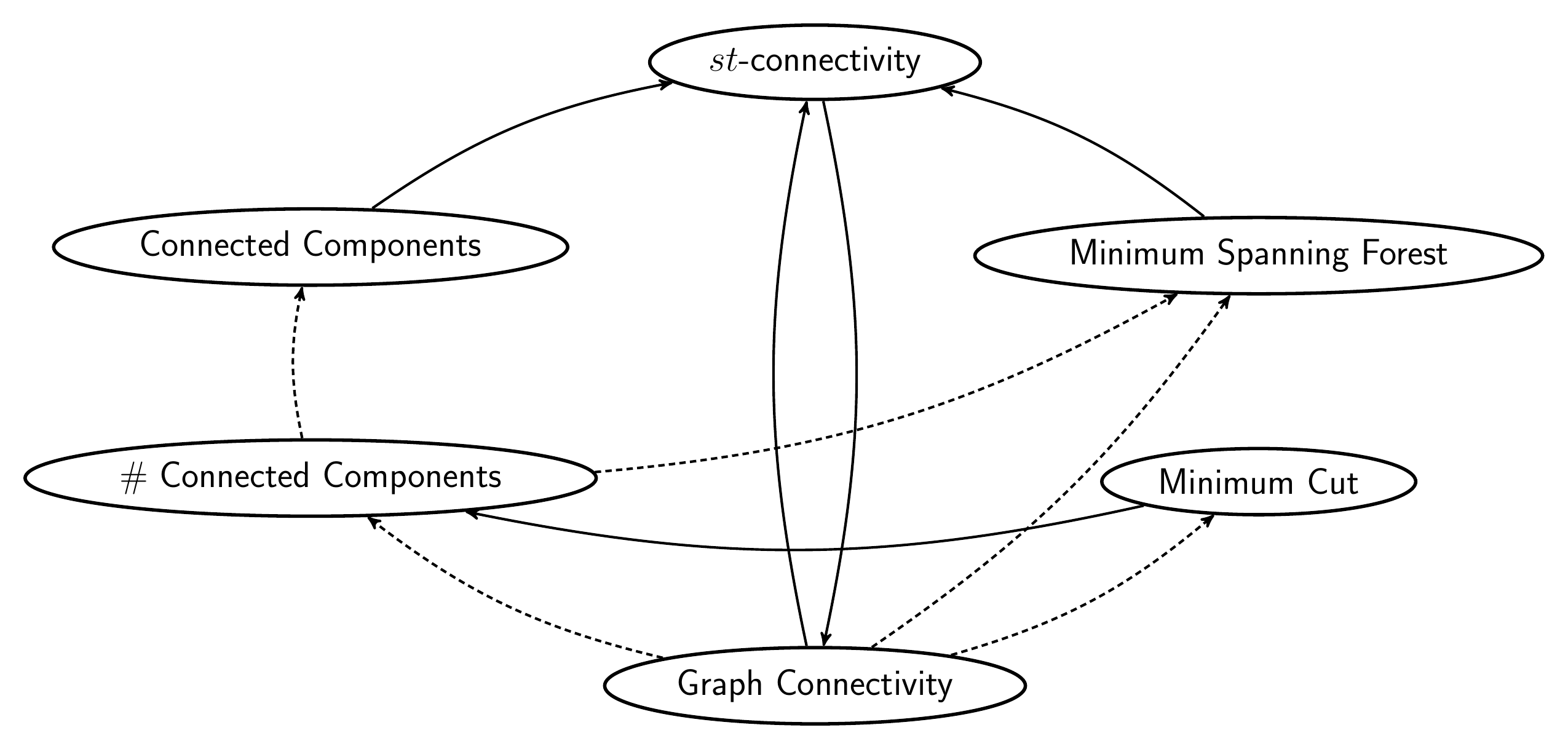}
       \caption{Constant-round reductions among graph connectivity and related problems. Dashed arrows correspond to trivial reductions.}\label{fig:connectivity}
    \end{center}
\end{figure}

\begin{lemma}
Graph connectivity, $st$-connectivity, \# connected components, connected components, minimum spanning forest,
and minimum cut are all equivalent under $O(1)$-round MPC reductions, with $O(n^{1-\epsilon})$ local memory per
machine for some constant $\epsilon \in (0,1)$, and $\tilde O(n^2 m (n+m))$ total memory.
\end{lemma}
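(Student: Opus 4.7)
The plan is to close a tight web of $O(1)$-round MPC reductions among the six problems. The trivial direction (the dashed arrows in Figure~\ref{fig:connectivity}) takes the form \emph{graph connectivity reduces to X}, where X is any of the richer problems: given CC labels one checks whether they all agree; given $\#$ CC one checks whether the count equals $1$; given an MSF one counts its edges (summation is in $\NC^1$, hence $O(1)$ MPC rounds by Corollary~\ref{cor:circuit_simulation}); given the min cut value one checks whether it is positive. All of these are single-round post-processing steps.

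For the non-trivial directions (reducing the richer problems to graph connectivity) I would proceed as follows. \emph{$st$-connectivity reduces to graph connectivity}: from $(G,s,t)$ construct $G'$ by adding a fresh vertex adjacent to every vertex of $G$ except $s$; then $G'$ is connected iff $s$ and $t$ were in the same component of $G$, a one-round construction. \emph{Graph connectivity reduces to $st$-connectivity}: use Lemma~\ref{lem:replication} to build $n$ copies of $G$, fix a vertex $r$, and run $n-1$ parallel $st$-connectivity queries with $r$ as source; the answer is the conjunction. \emph{CC and $\#$ CC reduce to graph connectivity}: chaining through the previous reduction, produce $n^2$ copies of $G$, run $st$-connectivity for every pair in parallel, set the label of each vertex $v$ to the minimum index of a vertex reachable from $v$, and count distinct labels to obtain $\#$ CC; these aggregations are in $\NC^1$ and hence $O(1)$ MPC rounds by Corollary~\ref{cor:circuit_simulation}. \emph{MSF reduces to graph connectivity}: invoke the cycle property, namely that an edge $e=(u,v)$ of weight $w(e)$ lies in the MSF iff $u$ and $v$ are disconnected in the subgraph $G_{<w(e)}$ of edges strictly lighter than $w(e)$; after $m$-fold replication, every such check becomes a parallel $st$-connectivity query, so all edges are classified simultaneously.

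The main obstacle is fitting min cut into the equivalence class, i.e., computing a global minimum cut using only a connectivity oracle in $O(1)$ MPC rounds. My preferred route is to observe that global minimum cut on undirected graphs lies in $\L$; combined with its trivial $\L$-hardness (inherited from graph connectivity) this makes it $\L$-complete under $\NC^1$ reductions, and by Corollary~\ref{cor:circuit_simulation} such reductions simulate in $O(1)$ MPC rounds, so min cut enters the equivalence class automatically via Theorem~\ref{thm:L}. Absent a convenient $\L$ algorithm one would have to combine polynomial replication with contraction-based or tree-packing machinery to reconstruct min cut directly from parallel connectivity queries, and I expect this to be the most delicate technical step; the generous total-memory budget of $\tilde O(n^2 m (n+m))$ in the lemma statement suggests that the authors indeed lean on heavy replication rather than a subtle combinatorial argument for this case.
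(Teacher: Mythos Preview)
Most of your reductions align with the paper's, but there are two genuine gaps.

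First, your direct reduction from $st$-connectivity to graph connectivity is incorrect: if you add a fresh vertex $w$ adjacent to every vertex of $G$ except $s$, then $G'$ is connected as soon as $s$ has \emph{any} neighbor $u$ in $G$ (since $u$ is adjacent to $w$), irrespective of whether $s$ and $t$ lie in the same component. The paper does not attempt a direct graph construction here; it simply invokes the projection-level equivalence of graph connectivity and $st$-connectivity due to Chandra, Stockmeyer, and Vishkin, which by Corollary~\ref{cor:circuit_simulation} simulates in $O(1)$ MPC rounds.

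Second, and more substantially, your treatment of minimum cut is the real gap. Your ``preferred route'' presumes that global minimum cut on undirected graphs lies in $\L$; this is not known, so you cannot invoke Theorem~\ref{thm:L}. The paper instead reduces minimum cut to \# connected components by \emph{parallelizing Karger's contraction algorithm}: it replicates the input $O(n^{2}\log n)$ times via Lemma~\ref{lem:replication}, and in each replica generates a random permutation of the edges (random scores plus sorting, all in $O(1)$ rounds). The key observation is that every such permutation has a unique prefix whose edges induce exactly two connected components (the two sides of the candidate cut), while shorter prefixes give more than two and longer ones give one. Hence each trial is further replicated $m$ times, and \# connected components is run in parallel on every prefix to locate the correct one; taking the minimum cut value over all trials succeeds with high probability. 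This two-level replication ($O(n^{2}\log n)$ trials times $m$ prefixes, each of size $O(n+m)$) is exactly what accounts for the $\tilde O(n^{2} m (n+m))$ total-memory bound in the lemma. Your fallback hint (``contraction-based machinery'') points in the right direction, but without the prefix characterization and the double replication the argument does not go through.
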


\begin{proof}
The reductions from graph connectivity to detecting the number of connected components, to MSF, and to minimum cut
are obvious. The reductions from \# connected components to connected components and to MSF are also obvious.
We already mentioned that there is a non-obvious low-level equivalence between graph connectivity and $st$-connectivity,
shown by Chandra et al.~\cite{ChandraSV84}.

Log-space reductions to $st$-connectivity from MSF and from connected components were given by Nisan and
Ta-Shma~\cite{NisanT95} for showing that the class $\SL$ is closed under complement. Here we will argue that
these reductions can be simulated in $O(1)$ MPC rounds.

We first discuss how to reduce connected components to $st$-connectivity. The reduction is to evaluate, for each
node $s$, $st$-connectivity for every other node $t$ in the graph. Then, the label $\ell$ assigned to node $s$ is
\[
\ell(s) = \min_{t \in V} \{\textrm{ID of node $t$ such that $t$ is connected to $s$}\}.
\]
By Lemma~\ref{lem:replication} and by the fact that the min function can be evaluated in $O(1)$ rounds
(by~\cite{ChandraSV84} and Corollary~\ref{cor:circuit_simulation}), this reduction can be accomplished in $O(1)$ MPC rounds.

We now discuss how to reduce MSF to $st$-connectivity. This is based on the following simple property shown in~\cite{NisanT95},
implicitly used in several other similar reductions~\cite{AhnGM12,HegemanPPSS15,PanduranganRS18,AndoniSSWZ18}:
an edge $e = \{u, v\}$ is in the minimum-weight spanning forest if and only if $u$ is not connected to $v$ in the graph
made up of all edges having lower weight than $e$. Then, by Lemma~\ref{lem:replication}, in $O(1)$ rounds the input
graph can be replicated $m$ times across the available machines, and then testing whether a designated edge $e$
is in the (unique) minimum-weight spanning forest of $G$ can be done in parallel for each edge of the graph.

Finally, we discuss how to reduce minimum cut to \# connected components. This is based on the parallelization of Karger's
celebrated contraction algorithm~\cite{Karger93}. Recall that Karger's algorithm repeats $O(n^2 \log n)$ times the process
of contracting randomly chosen edges, one by one, until only two nodes remain. By assumption we have enough machines
to replicate the input graph that many times in $O(1)$ MPC rounds (by Lemma~\ref{lem:replication}) and run the $O(n^2 \log n)$
trials in parallel. Identifying the minimum cut from these results can be done in $O(1)$ MPC rounds.

The question is therefore how to run a single time the contraction algorithm. To this end, it is convenient to work
with the equivalent reformulation of the contraction algorithm whereby we first generate a random permutation
of the $m$ edges, and then contract edges in the order in which they appear in the permutation. Generating a random
permutation can be done in $O(1)$ rounds by having each processor take one edge and assign it a score chosen
uniformly at random from a sufficiently large range of integers, and then by sorting these scores. Then, consider
any such permutation. The key property is that it has a prefix such that the set of edges in this prefix induces two
connected components (the two sides of the cut), that any prefix which is too short yields more than two connected
components, and that any prefix which is too long yields only one. Hence, with enough machines available, we can
determine the correct prefix by examining all the $m$ prefixes of each permutation in parallel.
\end{proof}

We can now summarize all the results of this section.

\begin{theorem}\label{thm:conn-equivalence}
The following problems are all equivalent under $O(1)$-round MPC reductions, with $O(n^{1-\epsilon})$ local memory
per machine for some constant $\epsilon \in (0,1)$, and $\tilde O(n^2 m (n+m))$ total memory: graph connectivity, connectivity
for promise graphs that are a disjoint union of cycles, $st$-connectivity, $st$-reachability for directed graphs of out-degree
one, cycle detection, order between vertices, formula evaluation, planarity testing, graph bipartiteness, list ranking, \# connected
components, connected components, minimum spanning forest, and minimum cut.
\end{theorem}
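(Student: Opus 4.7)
The plan is to assemble this theorem as a direct corollary of the three preceding lemmas in the section together with Theorem~\ref{thm:L}. I will partition the list of problems into two groups according to how their equivalence is established, and then take the maximum of the total-memory requirements across all reductions used.

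First, I would observe that a large subset of the listed problems is known to be \L-complete under $\NC^1$ (in fact, under $\FO$) reductions: graph connectivity, $st$-connectivity, $st$-reachability for out-degree-one directed graphs, cycle detection, order between vertices, formula evaluation, and planarity testing are classical \L-complete problems (see Appendix~\ref{sec:space}), while connectivity for promise graphs that are a disjoint union of cycles is shown \L-complete in Appendix~\ref{sec:space_1vs2cycles}. By Theorem~\ref{thm:L}, with $O(N^{2(1-\epsilon)})$ total memory, all \L-complete problems are equivalent in the MPC model, since each \L-complete problem is $\NC^1$-reducible to any other and $\NC^1$ reductions can be simulated in $O(1)$ MPC rounds by Corollary~\ref{cor:circuit_simulation}. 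This handles the first group in one shot.

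Second, I would pull in the three lemmas proved earlier in this section to fold in the remaining problems. The bipartiteness lemma inserts graph bipartiteness into the class by back-and-forth $O(1)$-round reductions to $st$-connectivity; the list-ranking lemma does the same for list ranking via order between vertices; and the long connectivity lemma does the same for $\#$ connected components, connected components, minimum spanning forest, and minimum cut, again using $st$-connectivity as the pivot. Composing any of these $O(1)$-round reductions with the $O(1)$-round reductions supplied by Theorem~\ref{thm:L} preserves the constant round bound, so the entire list collapses into a single equivalence class.

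Finally, for the parameters I would simply read off the most demanding bounds from the individual lemmas: the local memory per machine is $O(n^{1-\epsilon})$ for a constant $\epsilon \in (0,1)$ chosen small enough to absorb all the instance-blow-ups (at worst polynomial, as handled by the parameter-adjustment convention in Section~2.1), and the total memory is $\tilde{O}(n^2 m (n+m))$, which is the dominant requirement coming from the connectivity lemma (in particular, from the $O(n^2 \log n)$ parallel trials of Karger's contraction algorithm in the minimum cut reduction). I expect no substantive obstacle here: everything needed is already in place, and the proof is essentially a bookkeeping argument that verifies that the memory budgets and constant-round bounds compose without blow-up.
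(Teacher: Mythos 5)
Your proposal is correct and matches the paper's (implicit) proof: the theorem is stated as a summary of the preceding lemmas and the $\L$-completeness results from the appendix, and you assemble it in exactly that way, correctly identifying the Karger contraction step in the minimum-cut reduction as the source of the dominant $\tilde O(n^2 m (n+m))$ total-memory requirement.
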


\paragraph{Conditional hardness: $\L$-hard problems.}
Finally, there are problems known to be $\L$-hard, but not known to be in $\L$, such as densest subgraph and perfect matching.
Since for these problems only one-way reductions from problems in $\L$ are known, we don't know whether they are part of the
equivalence class of undirected graph connectivity.

\subsection{An Equivalence Class for Directed Graph Connectivity}

In this section we discuss the MPC equivalence class for graph connectivity in directed graphs. The problem corresponding
to $st$-connectivity in directed graphs is \emph{$st$-reachability}, that is, the problem of detecting whether there is a path
from a distinguished node $s$ to a distinguished node $t$ in a directed graph. $st$-reachability is the prototypical complete
problem for $\NL$~\cite{Papadimitriou94,Sipser97,AroraB09}.

By Definition~\ref{def:NL-complete}, hardness in class $\NL$ is defined with respect to log-space reducibility, but we do not
know whether log-space computations can be simulated in $o(\log N)$ MPC rounds---in fact, in Section~\ref{sec:MPC-Space}
we conjecture they cannot. However, it turns out that many of the known log-space reductions that establish $\NL$-hardness
of problems can be simulated in $O(1)$ MPC rounds. This is the case, for example, of the reductions between
$st$-reachability and \emph{shortest path}, the other canonical example of $\NL$-complete problem which, given an
undirected (unweighted) graph, two distinguished nodes $s$ and $t$, and an integer $k$, asks to determine if the length
of a shortest path from $s$ to $t$ is $k$.

\begin{lemma}
Shortest path on unweighted graphs is equivalent to $st$-reachability under $O(1)$-round MPC reductions, with
$O(n^{1-\epsilon})$ local memory per machine for some constant $\epsilon \in (0,1)$, and $O(n(n+m))$ total memory.
\end{lemma}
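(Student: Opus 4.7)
The plan is to establish the equivalence by giving a layered-graph construction in each direction, arguing that both constructions can be carried out in $O(1)$ MPC rounds and that the resulting instance has size $O(n(n+m))$, so that the subroutine is called on an instance of the appropriate size.

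\textbf{Shortest path $\leq_{\MPC}$ $st$-reachability.} Given an instance $(G,s,t,k)$ of undirected shortest path, I would construct a directed layered graph $G'$ on $n+1$ layers, where layer $i$ contains a copy $u_i$ of each vertex $u \in V(G)$. For every undirected edge $\{u,v\} \in E(G)$ I add the two directed edges $u_i \to v_{i+1}$ and $v_i \to u_{i+1}$ for all $i$, and for every vertex $u$ I also add a ``stay'' edge $u_i \to u_{i+1}$. Then $t_k$ is reachable from $s_0$ in $G'$ iff $\mathrm{dist}_G(s,t) \le k$. Hence $\mathrm{dist}_G(s,t) = k$ iff $t_k$ is reachable from $s_0$ and $t_{k-1}$ is not. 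I would build two disjoint copies of $G'$ on separate sets of machines (using Lemma~\ref{lem:replication} to duplicate the input) and run the $st$-reachability subroutine in parallel on both, combining the two answers in one more round.

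\textbf{$st$-reachability $\leq_{\MPC}$ shortest path.} Given a directed graph $G$ with source $s$ and sink $t$, I would construct an undirected layered graph $H$ on $n+1$ layers as follows: for every directed edge $(u,v) \in E(G)$ add the undirected edge $\{u_i, v_{i+1}\}$, and for every vertex $u$ add the undirected edge $\{u_i, u_{i+1}\}$. Because every edge of $H$ changes the layer index by exactly $1$, any $s_0$-to-$t_n$ walk has length at least $n$, with equality iff it is layer-monotone; a layer-monotone walk of length $n$ corresponds precisely to a walk of length $\le n$ from $s$ to $t$ in $G$, i.e., to a directed path from $s$ to $t$. Thus $t$ is reachable from $s$ in $G$ iff the shortest-path query $(H, s_0, t_n, n)$ returns yes. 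I would then invoke the shortest-path subroutine on $H$ with $k=n$.

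\textbf{MPC implementation.} Both layered graphs have $O(n^2)$ vertices and $O(n(n+m))$ edges, so they fit within the stated $O(n(n+m))$ total memory. Creating them is a local, data-parallel operation: each machine holding an edge (resp.\ vertex) of the original graph emits its $O(n)$ layered copies in parallel, which can be performed in $O(1)$ MPC rounds using Lemma~\ref{lem:replication} to first distribute the input and vertex list to enough machines. When invoking the subroutine on an instance of size $N' = O(n(n+m))$, I pick $\epsilon' \in (0,1)$ with $(N')^{1-\epsilon'} = O(n^{1-\epsilon})$, as discussed in Section~2.1.

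\textbf{Main obstacle.} The only delicate point is in the first direction, where I must check $\mathrm{dist}_G(s,t) = k$ rather than merely $\le k$; the natural single query only captures the ``$\le$'' side, so I run two $st$-reachability queries (for lengths $k$ and $k-1$) on independent copies and combine. Once this two-query trick is in place, the round, memory, and size bounds all follow from the fact that the layered construction is a straightforward local expansion of the input.
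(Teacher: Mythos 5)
Your proposal is correct and takes essentially the same approach as the paper: both directions use a layered-graph construction with per-vertex ``stay'' edges, the shortest-path$\to$reachability direction handles the exact-length requirement via an additional query at $k-1$ (the paper phrases this as ``repeating the construction with $b-1$''), and the reachability$\to$shortest-path direction relies on the observation that in the undirected layered graph any $s$-layer-$0$ to $t$-layer-$n$ path has length at least $n$, with equality precisely for layer-monotone paths.
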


\begin{proof}
We first show that $st$-reachability can be reduced to shortest path in $O(1)$ MPC rounds. For integer $k$ we denote
the integers $\{1,2,\dots,k\}$ by $[k]$. Given a directed graph $G=(V,E)$ and two designated nodes $s$ and $t$,
we create a new (undirected) layered graph $G'=(V',E')$ where
\[
V' = \{v_i : v \in V, i \in [n]\}
\]
and
\[
E' = \{ \{v_i,v_{i+1}\} : v \in V, i \in [n-1]\} \cup \{ \{u_i,v_{i+1}\} : (u,v) \in E, i \in [n-1] \}.
\]
It is easy to see that there is a directed path from $s$ to $t$ in $G$ if and only if there is a path of length $n-1$
from $s_1$ to $t_n$ in $G'$.

We now show the other direction. Given an undirected graph $G=(V,E)$, two designated nodes $s$ and $t$, and an
integer $b \in [n-1]$, we create a new directed layered graph $G'=(V',E')$ where
\[
V' = \{v_i : v \in V, i \in [b]\}
\]
and
\[
E' = \{ (v_i,v_{i+1}) : v \in V, i \in [b-1]\} \cup \{ (u_i,v_{i+1}) : \{u,v\} \in E, i \in [b-1] \}.
\]
Then again it is easy to see that the length of a shortest path from $s$ to $t$ is at most $b$ if and only if there is
a directed path from $s_1$ to $t_b$ in $G'$. If the length is at most $b$ then one can determine if it is exactly $b$
by repeating the same construction with $b-1$ in place of $b$.

In both directions, nodes and edges of $G'$ can be easily generated in $O(1)$ rounds, and stored with $O(n(n+m))$
total memory. Since $\vert V' \vert \leq n^2$, when working with $G'$ the size of the local memory is set to $n^{2(1-\epsilon')}$
where $\epsilon' \in (0,1)$ is a constant such that $n^{2(1-\epsilon')} = O(n^{1-\epsilon})$. Then, an $O(f(n))$-round
algorithm for one problem translates into an $O(f(n))$-round algorithm for the other, and vice versa.
\end{proof}

There are other $\NL$-complete problems that can be shown to be equivalent under $O(1)$-round MPC reductions.
Some examples are directed cycle detection, by a simple adaptation of the preceding reductions, and strong
connectivity, which follows from a result in~\cite{ChandraSV84}. We suspect that many other log-space reductions
are actually (or can easily be translated into) $O(1)$-round MPC reductions, thus enabling us to enlarge the equivalence
class for graph connectivity in directed graphs almost effortlessly by leveraging known results in complexity theory.

When this is not possible, one might have to devise novel reductions. We now do so for some important shortest-path-related
problems as well as for some graph centrality problems.

\subsubsection{New Fine-Grained MPC Reductions: Constant-Round Equivalences Between Graph Centrality Problems, APSP, and Diameter}

In this section we prove a collection of constant-round equivalences between shortest path and many other
problems on weighted graphs.

First, some preliminaries. In a graph problem, the input is an $n$-node $m$-edge (directed or undirected) graph
$G=(V,E)$ with integer edge weights $w \colon E \to \{-M,\dots,M\}$ where $M = O(n^c)$ for some positive
constant $c$. $G$ is assumed to contain no negative-weight cycles. Let $d(u,v)$ denote the (shortest-path) distance
from node $u \in V$ to node $v \in V$, that is, the minimum over all paths from $u$ to $v$ of the total weight sum
of the edges of the path. If there is no path connecting the two nodes, i.e., if they belong to different connected
components, then conventionally the distance is defined to be infinite.

The fundamental all-pairs shortest paths (APSP) problem is to compute $d(u,v)$ for every pair of nodes $u,v \in V$.
In the (sequential) RAM model, APSP has long been known to admit an $O(n^3)$ time algorithm. Despite the long
history, no algorithm that runs in time $O(n^{3-\epsilon})$ for some constant $\epsilon > 0$ is known, and it is
conjectured that no such algorithm exists~\cite{VassilevskaW18,Vassilevska18}. This conjecture is commonly used
as a hardness hypothesis in fine-grained complexity theory to rule out faster algorithms than those currently known
for several problems~\cite{Vassilevska18}. Beyond such APSP-hardness results, some important problems have been
shown to be \emph{equivalent} to APSP, in the sense that either all such problems admit $O(n^{3-\epsilon})$ time
algorithms, or none of them do~\cite{VassilevskaW18,AbboudGW15,Vassilevska18}.

These equivalences and most hardness results under the APSP hypothesis rely on a reduction from APSP to
the \emph{negative triangle} problem, which asks whether a graph has a triangle with negative total weight.
Although negative triangle can be easily solved in $O(1)$ MPC rounds thanks to Lemma~\ref{lem:replication}, a
key building block in the reduction from APSP~\cite{VassilevskaW18} is a well-known equivalence~\cite{FischerM71}
between APSP and the \emph{distance product} problem of computing the product of two matrices over the $(\min,+)$
semiring (also known as \emph{min-plus matrix multiplication}); unfortunately, in the reduction from APSP to distance
product there are $\lceil \log n \rceil$ of such matrix products (by using the ``repeated squaring'' strategy), and this
takes $O(\log n)$ MPC rounds---which is likely to be best possible, for a reason that will be clear in the next paragraph.
Hence in the MPC model we cannot rely on a reduction to negative triangle to prove equivalences to APSP or
related hardness results: we need sublogarithmic fine-grained reductions.

Hence we shall follow a different path, by reducing from the shortest path problem. Given a weighted graph, two
distinguished nodes $s$ and $t$, and an integer $k$, shortest path is the problem of determining if the distance
of a shortest path from $s$ to $t$ is $k$. This problem is $\NL$-complete, even for undirected and unweighted
graphs~\cite{BorodinCDRT89}. (This also explains why the repeated matrix squaring discussed in the previous
paragraph is best possible under Conjecture~\ref{cnj:NL}.) As we will show shortly, it turns out that shortest
path is reducible in $O(1)$ MPC rounds to several fundamental graph problems, including many graph centrality
problems defined in terms of shortest paths. Then, by crucially exploiting the availability of many machines,
we will argue that APSP is $O(1)$-round reducible to shortest path. Obvious reductions to APSP complete the
picture and establish the equivalence of all these problems under $O(1)$-round MPC reductions. See
Figure~\ref{fig:summary} for a complete summary.

\begin{figure}[h]
    \begin{center}
       \includegraphics[width=\textwidth]{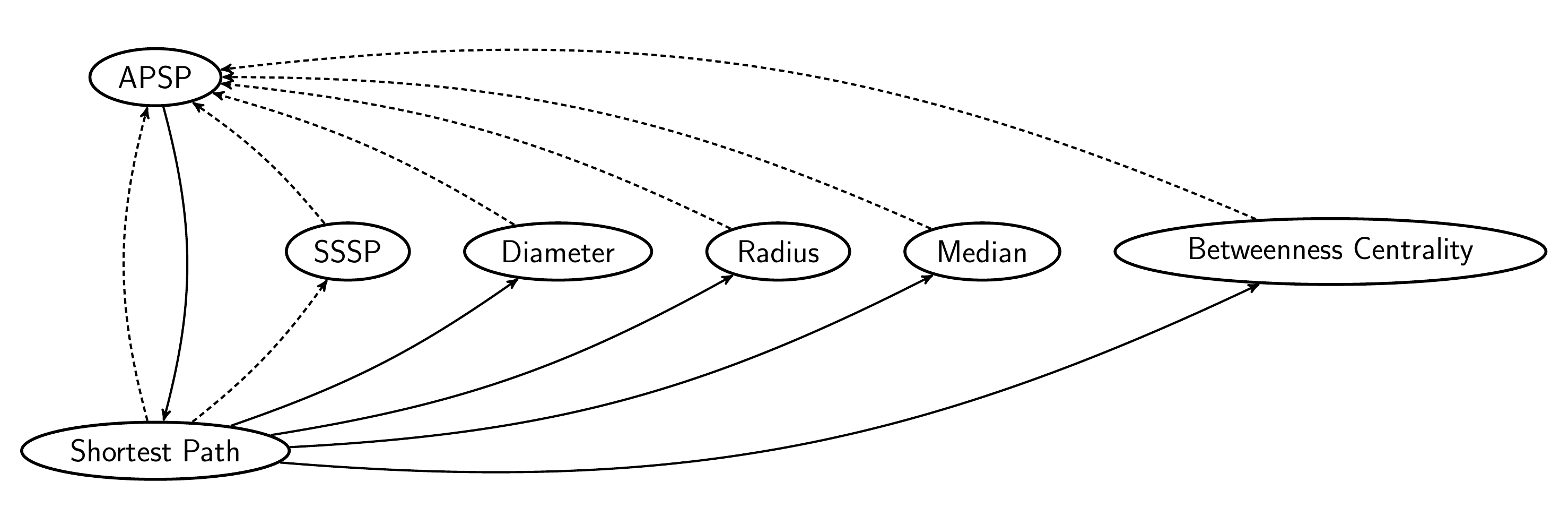}
       \caption{The constant-round reductions shown in this section. Dashed arrows correspond to trivial reductions.}\label{fig:summary}
    \end{center}
\end{figure}

We now formally define the problems we are going to investigate. The \emph{eccentricity} $\epsilon(v)$ of a node
$v$ is the greatest distance between $v$ and any other node. It can be thought of as how far a node is from
the node most distant from it in the graph. The \emph{diameter} of a graph is the greatest distance between
any pair of nodes or, equivalently, the maximum eccentricity of any node in the graph, that is,
\[
\textrm{diam}(G) = \max_{u \in V} \max_{v \in V} d(u,v).
\]
The \emph{radius} of a graph is the minimum eccentricity of any node, that is,
\[
\textrm{radius}(G) = \min_{u \in V} \max_{v \in V} d(u,v),
\]
and a node with minimum eccentricity is called a \emph{center} of the graph.
The \emph{distance sum} of a node $u$ is the sum of the distances from $u$ to all the other nodes,
that is, $\sum_{v \in V} d(u,v)$.\footnote{This is sometimes also called the \emph{farness} of $u$.}
In a (strongly) connected graph, the \emph{closeness centrality} of a node $u$ is the normalized inverse
of its distance sum, that is,
\[
\text{CC}(u) = \frac{n-1}{\sum_{v \in V} d(u,v)}.
\]
A node with maximum closeness centrality, i.e., a node that minimizes the sum of the distances to all other nodes
is called a \emph{median} of the graph, and the value
\[
\min_{u \in V} \sum_{v \in V} d(u,v)
\]
is defined as the \emph{median} of the graph.
The \emph{betweenness centrality} of a node $u$ is defined as
\[
\text{BC}(u) = \sum_{s,t \in V \setminus \{u\}, s \neq t} \dfrac{\sigma_{s,t}(u)}{\sigma_{s,t}},
\]
where $\sigma_{s,t}$ is the total number of distinct shortest paths from $s$ to $t$, and $\sigma_{s,t}(u)$ is the number
of such paths that use $u$ as an intermediate node. Informally, betweenness centrality measures the propensity of a node
to be involved in shortest paths.

We start by showing the simple fine-grained equivalence between APSP and shortest path.

\begin{lemma}\label{lem:APSP-SP}
APSP is equivalent to shortest path under $O(1)$-round MPC reductions, with $O(n^{1-\epsilon})$ local memory
per machine for some constant $\epsilon \in (0,1)$, and $O(n^2 (n+m))$ total memory.
\end{lemma}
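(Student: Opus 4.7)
The plan is as follows. The direction APSP $\Rightarrow$ SP is immediate: an APSP algorithm produces every pairwise distance $d(u,v)$, and a single shortest-path query $(s,t,k)$ is answered in one additional round by comparing $d(s,t)$ with $k$. So the content lies entirely in the reverse reduction SP $\Rightarrow$ APSP.

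For the reverse direction I would exploit the polynomial total memory to run an SP subroutine in parallel on independent replicas of $G$, one per ordered pair of nodes. First, I would invoke Lemma~\ref{lem:replication} to create $n^2$ copies of the input graph in $O(1)$ MPC rounds; since each copy has size $O(n+m)$, this fits within the claimed total memory of $O(n^2(n+m))$. Second, each pair $(u,v)\in V\times V$ is assigned to a distinct copy, and the SP algorithm is invoked on that copy with endpoints $u$ and $v$. Because the $n^2$ invocations act on disjoint sets of machines, they run concurrently and take only as many rounds as a single SP call. Third, the $n^2$ returned distances are gathered into the APSP output in $O(1)$ additional rounds.

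The main obstacle is the gap between the decision version of SP---``does $d(s,t)=k$?''---and the numerical value $d(u,v)$ that APSP must produce. I would address this \emph{within} each per-pair replica: since edge weights are bounded by $M=O(n^c)$, all finite distances lie in the polynomially bounded range $\{-nM,\dots,nM\}$, so a further polynomial replication of the per-pair copy (again via Lemma~\ref{lem:replication}) allows one to test every admissible value of $k$ in parallel and select the correct one by an $O(1)$-round aggregation. Equivalently, an $O(\log n)$-round binary search can be absorbed into $f(n)$ whenever $f(n)=\Omega(\log n)$, which is the only interesting regime.

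Finally, as in the preceding reductions of the section, I would rebind the memory parameter $\epsilon$ to a slightly smaller constant $\epsilon'$ so that the SP subroutine, invoked on instances of size $O(n+m)$, still respects the local memory constraint $O(n^{1-\epsilon})$ of the outer computation. With these pieces in place, any $O(f(n))$-round algorithm for shortest path yields an $O(f(n))$-round algorithm for APSP, establishing the claimed equivalence.
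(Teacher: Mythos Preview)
Your approach matches the paper's: replicate the graph across $\Theta(n^{2})$ copies via Lemma~\ref{lem:replication} and run the shortest-path subroutine for one ordered pair per copy in parallel. The paper's proof says essentially nothing more than this and, in particular, glosses over the decision-to-search gap that you explicitly flag.

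Your handling of that gap, however, does not stay within the lemma's stated resources. Testing all $O(nM)$ candidate values of $k$ per pair would push the total memory to $O(n^{2}\cdot nM\cdot(n+m))$, which overshoots the claimed $O(n^{2}(n+m))$ bound once $M$ is nonconstant. And the binary-search alternative costs $\Theta(\log(nM))$ rounds, which is incompatible with an $O(1)$-round reduction; the lemma is stated for arbitrary $f$, so you cannot simply declare $f(n)=\Omega(\log n)$ to be ``the only interesting regime'' and absorb the search there. So on the main idea you are aligned with the paper, and you are right that the decision-versus-value issue deserves a remark, but neither of your two proposed fixes preserves the bounds as written---nor, in fairness, does the paper's own proof address this point.
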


\begin{proof}
The reduction from shortest path to APSP is obvious. The other direction is also immediate when we have enough
machines, and specifically $O(n^2 (n+m))$ total memory: by Lemma~\ref{lem:replication} we can create $2\binom{n}{2}$
copies of the input graph in $O(1)$ MPC rounds, and then in parallel, one pair for each copy, compute the shortest path
for each (ordered, if the graph is directed) pair of nodes.
\end{proof}

In the following results we will use roughly the same reduction. We start with the problem of determining the diameter of a graph.

\begin{lemma}\label{lem:SP-Diam}
Shortest path is $O(1)$-round MPC reducible to diameter, with $O(n^{1-\epsilon})$ local memory per machine
for some constant $\epsilon \in (0,1)$, and $O(n+m)$ total memory.
\end{lemma}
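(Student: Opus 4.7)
The plan is to transform, in $O(1)$ MPC rounds and with only $O(n+m)$ total memory, an instance $(G,s,t,k)$ of shortest path into a single instance $G'$ of the diameter problem, and to read $d_G(s,t)$ directly off $\mathrm{diam}(G')$. The core idea is to pin down both endpoints of the diameter-realizing pair, so that the diameter of $G'$ encodes exactly one chosen distance of $G$.

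First, I would build $G'$ by adding three auxiliary vertices $a$, $b$, $c$ to $G$. Attach $a$ to $s$ and $t$ to $b$ via edges of large weight $W$, so that every $a$-to-$b$ path is forced to traverse $G$ from $s$ to $t$ and hence has length $2W + d_{G'}(s,t)$. Attach $c$ to every $v\in V$ via an edge of weight $K$; this turns $c$ into a hub that caps every intra-$V$ distance at $2K$ and keeps $G'$ effectively connected even when $G$ is not. In the directed case the pendant edges point as $a\to s$ and $t\to b$, the hub edges go both ways between $c$ and each $v\in V$, and I would also add $c\to a$ and $b\to c$ so that every ordered pair of vertices has finite distance. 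I would set $K=nM$ and $W=3nM+1$. The whole construction introduces $3$ new vertices and $O(n)$ new edges, so $|V'|+|E'|=O(n+m)$, fitting the total-memory budget, and each machine can locally emit the constant-many new edges touching its own vertices in $O(1)$ MPC rounds (broadcasting the $O(1)$-size description of $a,b,c,W,K$ via Lemma~\ref{lem:replication} if needed).

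Next, I would analyze pairwise distances in $G'$ and show that the diameter is attained precisely at $(a,b)$, giving $\mathrm{diam}(G')=2W+d_{G'}(s,t)$. The key facts are: $d_{G'}(u,v)\le 2K$ for $u,v\in V$ (via the hub); $d_{G'}(\cdot,\cdot)\le W+2K$ for any pair involving exactly one of $a$ or $b$; and $d_{G'}(a,b)\ge 2W-nM$ in the connected case and $=2W+2K$ in the disconnected case. Crucially, since $|d_G(s,t)|\le nM<2K$ whenever $s,t$ are in the same component, the hub never shortcuts the genuine $s$-to-$t$ distance in $G$, so $d_{G'}(s,t)=d_G(s,t)$ when connected and $d_{G'}(s,t)=2K$ when not. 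The choice $W=3nM+1$, $K=nM$ makes $2W-nM$ strictly exceed $W+2K$, so $(a,b)$ beats every other pair. The reduction then outputs ``no'' when $|k|>nM$, and otherwise invokes the diameter procedure on $G'$ and outputs ``yes'' iff $\mathrm{diam}(G')-2W=k$; in the disconnected case $\mathrm{diam}(G')-2W=2nM>nM\ge|k|$, so the answer is correctly ``no''.

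The main obstacle will be the numerical balancing. The hub weight $K$ must be small enough that $2K$ dominates every intra-$V$ distance (so the diameter is not stolen by a ``wrong'' pair of vertices in $V$) yet large enough that the two-edge path $s$-$c$-$t$ does not shortcut $d_G(s,t)$; the pendant weight $W$ must in turn be large enough that $(a,b)$ beats any pair involving exactly one pendant, accounting also for negative edge weights via the bound $|d_G(u,v)|\le nM$ for any connected pair. Verifying that the choice $K=nM$, $W=3nM+1$ satisfies all these inequalities simultaneously, uniformly over connected and disconnected inputs and in both the directed and undirected cases, is the delicate part of the argument.
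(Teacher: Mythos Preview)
Your construction is essentially the paper's: a universal hub with edge weight $nM$ to guarantee finite distances, plus heavy pendants at $s$ and $t$ to pin down the two endpoints of the diameter, so that $\mathrm{diam}(G')$ encodes $d_{G'}(s,t)$. The only differences are cosmetic---the paper realizes each pendant as a chain of $2n$ weight-$M$ edges (total weight $2nM$) rather than a single edge of weight $3nM+1$, and in the directed case it makes all auxiliary edges bidirectional instead of adding your two back-edges $c\to a$ and $b\to c$.
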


\begin{proof}
We start with the case of undirected graphs. Given an instance of shortest path, the idea is to alter the input graph
by sticking two new and sufficiently long paths to nodes $s$ and $t$, so that the path of largest total weight includes
both $s$ and $t$.

This is sufficient if the original graph $G$ is connected; otherwise, the diameter is infinite, and from this information
we cannot determine the length of a shortest path from $s$ to $t$. Hence, we shall first make $G$ connected in a way
that alters the distance between $s$ and $t$ only if they are not connected in $G$. Since the distance between any
two nodes can be at most $(n-1)M$, this can be achieved by adding to the graph a new node $v$ and $n$ edges
of weight $nM$ between $v$ and any other node. Then, we append two additional chains to $s$ and $t$, each with
$2n$ edges of weight $M$, and denote this modified graph by $G'$. See Figure~\ref{fig:SP-Diam}.

\begin{figure}[h]
    \begin{center}
       \includegraphics[trim=0cm 0cm 0cm 0cm, clip=true, width=0.9\textwidth]{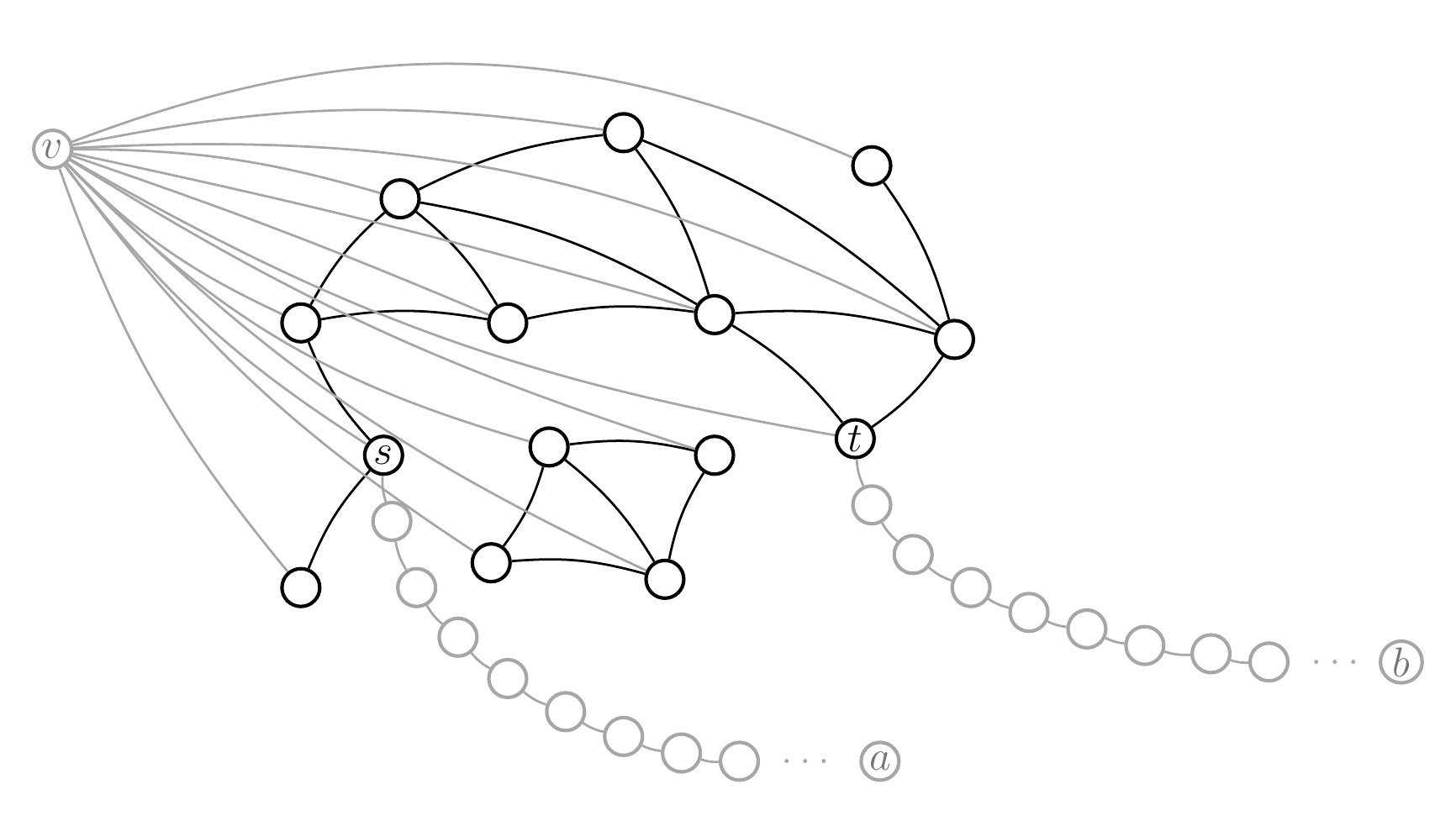}
       \caption{Reduction from shortest path to diameter. Nodes and edges of the original graph $G$ are in black,
       whereas nodes and edges added in the reduction are in gray.}\label{fig:SP-Diam}
    \end{center}
\end{figure}

This reduction can be performed in $O(1)$ MPC rounds, it increases the number of nodes and the number
of edges by $O(n)$, and the maximum absolute weight by a factor of $O(n)$. Therefore, any MPC algorithms that
runs in $O(f(n,m,M))$ rounds in the new graph $G'$ can be used to solve the original instance $G$ in $O(f(n,m,M))$
rounds as well.

Observe that the diameter of the modified graph $G'$ must include the two chains appended to $s$ and $t$.
Hence any algorithm for the diameter when executed on graph $G'$ always returns $4nM$ plus the shortest-path distance
between $s$ and $t$ in $G'$. By construction, the latter quantity, which we denote by $\alpha$, is at most $(n-1)M$ if $s$
and $t$ are connected in $G$, and (exactly) $2nM$ otherwise. Thus the answer to shortest path is $\alpha$ if the
diameter of $G'$ is at most $4nM + (n-1)M$, and infinity otherwise.

In the directed case, we use the same weighted graph $G'$ as before, adding one parallel edge for each edge,
both with the same weight but with opposite directions. The rest of the algorithm is the same and its analysis is
analogous to the undirected case.
\end{proof}

Observe that $st$-connectivity in undirected or directed graphs can also be reduced to diameter, with the same reduction.
However, in undirected graphs $st$-connectivity is only $\L$-hard, while shortest path is $\NL$-hard.

\begin{lemma}\label{lem:SP-Radius}
Shortest path is $O(1)$-round MPC reducible to radius, with $O(n^{1-\epsilon})$ local memory per machine
for some constant $\epsilon \in (0,1)$, and $O(n+m)$ total memory.
\end{lemma}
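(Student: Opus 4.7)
The plan is to follow the template of Lemma~\ref{lem:SP-Diam}, constructing a modified graph $G'$ whose radius encodes $d_G(s,t)$. The main complication, compared with the diameter reduction, is that radius is a min--max quantity: attaching symmetric chains at $s$ and $t$ would make the center of $G'$ any node $v$ minimizing $\max(d_G(v,s), d_G(v,t))$, a quantity that can lie anywhere in $[d_G(s,t)/2, d_G(s,t)]$ on weighted graphs and therefore fails to determine $d_G(s,t)$. My idea is to break this symmetry and force the center onto a predictable location on one of the chains.

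Concretely, I would add a dummy node joined to every node of $G$ by a weight-$nM$ edge to ensure connectivity (as in the diameter proof), then append at $s$ a chain $s=s^{(0)},s^{(1)},\ldots,s^{(L_1)}$ of $L_1$ edges of weight $M$ and at $t$ a shorter chain of $L_2<L_1$ weight-$M$ edges, with the asymmetry $L_1-L_2$ chosen so that $(L_1-L_2)M$ strictly exceeds every possible value of $d_G(s,t)$ (e.g., $L_1=3n$ and $L_2=n$). These additions contribute $O(n)$ nodes and edges, preserving the $O(n+m)$ memory bound; the directed case is handled verbatim as in Lemma~\ref{lem:SP-Diam}, by pairing each directed edge with a parallel reverse-direction edge of the same weight. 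All of these modifications can be performed in $O(1)$ MPC rounds through the standard broadcasting primitive of Lemma~\ref{lem:replication}.

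The analysis then shows that the eccentricity of a chain node $s^{(i)}$ equals $\max\bigl((L_1-i)M,\ iM+d_G(s,t)+L_2 M\bigr)$, minimized over real $i$ at the balance point $i^{*}=(L_1-L_2)/2 - d_G(s,t)/(2M)$ with value $(L_1+L_2)M/2 + d_G(s,t)/2$. A short case analysis checks that the eccentricity of every node of the original graph is at least $L_1 M$ and that of every $t$-chain node is at least $L_1 M + d_G(s,t)$; under the chosen asymmetry, both strictly exceed the value at $s^{(i^{*})}$, so $\mathrm{radius}(G')$ equals this value and $d_G(s,t)$ is recovered by a single arithmetic step.

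The main obstacle I anticipate is that $i^{*}$ need not be an integer: when $d_G(s,t)$ is not a multiple of $2M$, the true minimum over integer chain positions differs from the closed-form value above by a rounding term of up to $M$, which would spoil the exact recovery of $d_G(s,t)$. I plan to resolve this either by a preliminary scaling of the edge weights that forces the necessary divisibility, or by inserting a short unit-weight ``tuning'' segment around the balance point of the chain, so that an integer position realizes the closed-form minimum exactly while the reduction remains within the prescribed memory budget.
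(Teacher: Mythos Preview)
Your approach differs from the paper's and, as written, does not close. There are two issues, one minor and one fatal.

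\textbf{Minor issue (eccentricity formula).} Your claim that $\operatorname{ecc}(s^{(i)}) = \max\bigl((L_1-i)M,\ iM + d(s,t) + L_2 M\bigr)$ assumes that, looking from $s$ into the rest of the graph, the farthest node is the end of the $t$-chain. With $L_2=n$ this need not hold: if $s$ and $t$ are close but some node $u$ lies in a different component of $G$, then $d_H(s,u)=2nM > d_H(s,t)+L_2 M$, and the eccentricity of $s^{(i)}$ is $iM + 2nM$, which carries no information about $d(s,t)$. This is repairable by taking $L_2$ large enough (at least $2n$, and more under negative weights), so it is not the real obstruction.

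\textbf{Fatal issue (rounding).} The integer minimum $\min_{i\in\mathbb Z}\max\bigl((L_1-i)M,\ iM+E\bigr)$, viewed as a function of $E$, is not injective: it is piecewise linear with alternating slope-$1$ segments and \emph{flat} segments, each of length $M$ in $E$. Concretely, for $E \in [(L_1-2i_0)M,(L_1-2i_0+1)M)$ the minimum equals $(L_1-i_0)M + (E-(L_1-2i_0)M)$, while for $E \in [(L_1-2i_0+1)M,(L_1-2i_0+2)M)$ it is the constant $(L_1-i_0+1)M$. Thus an entire length-$M$ interval of values of $d(s,t)$ maps to the same radius, and $d(s,t)$ cannot be recovered. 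Your two proposed fixes do not rescue this within the stated $O(n+m)$ total memory: any weight scaling that forces $E \bmod 2M \in [0,M)$ (so that you land in the injective regime) simultaneously scales the diameter of the graph, so the chain must have total weight $\Theta(nM')$ with the new bound $M'$; and a ``unit-weight tuning segment'' long enough to cover all possible balance points must contain $\Theta(nM)$ nodes, since the balance point ranges over an interval of length $\Theta(nM)$ in weight and you need unit granularity there. Either way you add $\Omega(nM)$ nodes, and $M$ can be $n^c$ for an arbitrary constant $c$.

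\textbf{What the paper does instead.} The paper sidesteps the balance-point issue entirely. It takes the diameter gadget $G'$ (from Lemma~\ref{lem:SP-Diam}) with chain endpoints $a,b$, makes a second copy, and glues $b$ of the first copy to $a$ of the second. By the symmetry of the resulting graph $G''$, the glue point $c$ is a center, and $\operatorname{radius}(G'')=\operatorname{ecc}_{G''}(c)=\operatorname{ecc}_{G'}(b)=\operatorname{diam}(G')$. So the radius of $G''$ equals the already-analyzed diameter of $G'$, from which $d(s,t)$ is read off exactly. No balancing, no rounding, and only $O(n+m)$ extra nodes and edges.
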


\begin{proof}
We start with the case of undirected graphs. Given an instance of shortest path, we will construct the graph $G'$
of Figure~\ref{fig:SP-Diam} used in the reduction from shortest path to diameter, and then we will modify $G'$
to obtain a new graph $G''$ such that $\text{radius}(G'') = \text{diameter}(G')$.

The graph $G''$ is obtained from $G'$ by creating a second copy of it, and then by contracting node $b$ of the
first copy of $G'$ with node $a$ of the second copy of $G'$. (Recall that the contraction of a pair of nodes $v_i$
and $v_j$ of a graph produces a graph in which the two nodes $v_1$ and $v_2$ are replaced with a single node
$v$ such that $v$ is adjacent to the union of the nodes to which $v_1$ and $v_2$ were originally adjacent.)
This reduction can be performed in $O(1)$ MPC rounds, it increases the number of nodes by $O(n)$ and the
number of edges by $O(m)$, and the maximum absolute weight by a factor of $O(n)$.

Let $c$ be the node resulting from this contraction. It is easy to see that $c$ is the center of this newly constructed
graph $G''$: in fact, by the symmetry of $G''$ and by the assignment of the edge weights, any other node has higher
eccentricity. Thus, $\text{radius}(G'') = \text{diameter}(G')$, and hence we can proceed as in the proof of
Lemma~\ref{lem:SP-Diam}.

In the directed case, we use the same weighted graph $G''$ as before, adding one parallel edge for each edge,
both with the same weight but with opposite directions. The rest of the algorithm is the same and its analysis is
analogous to the undirected case.
\end{proof}

\begin{lemma}\label{lem:SP-Median}
Shortest path is $O(1)$-round MPC reducible to median, with $O(n^{1-\epsilon})$ local memory per machine
for some constant $\epsilon \in (0,1)$, and $O(n+m)$ total memory.
\end{lemma}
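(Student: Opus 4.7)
The plan is to mimic the chain-appending strategy of Lemmas~\ref{lem:SP-Diam} and~\ref{lem:SP-Radius}, but tuned so that the median value, rather than the diameter or radius, encodes $d_G(s,t)$. Starting with $G$, I would first add the connectivity node with weight-$nM$ edges to every vertex (so that all distances remain finite) and then attach to $s$ a chain $s,p_1,\ldots,p_L$ and symmetrically to $t$ a chain $t,q_1,\ldots,q_L$, every chain edge having weight $M$; call the resulting graph $G''$. For a candidate $u$, write $f(u)=\sum_{w\in V(G'')} d_{G''}(u,w)$ and $A'_u=\sum_{v\in V(G)\setminus\{u\}} d_G(u,v)$.

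Three direct computations, each analogous to the bookkeeping carried out in the preceding lemmas, drive the analysis. (i)~If $u\in V(G)$ lies on a shortest $s$-$t$ path in $G$, then the $L$ chain nodes on the opposite side of $u$ contribute $L\,d_G(s,t)$ to the sum, and a standard expansion yields $f(u)=A'_u+ML(L+1)+L\,d_G(s,t)+nM$. (ii)~If $u\in V(G)$ is not on any shortest $s$-$t$ path, then $d_G(u,s)+d_G(u,t)\ge d_G(s,t)+1$, so $f(u)\ge f(s)+L$. (iii)~For a chain node $p_j$ with $1\le j\le L$ (and symmetrically $q_j$), expanding $\sum_{i=0}^L |i-j|M$ along the chain attached to $s$ and combining with the $V(G)$ and opposite-chain contributions yields $f(p_j)-f(s)=Mj(j-L)+(n+L)jM=Mj(j+n)>0$. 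Hence, for $L$ chosen large enough to swamp the ``noise'' $\max_u A'_u$, step~(ii) pins the minimum onto some node $u^*\in V(G)$ lying on a shortest $s$-$t$ path in $G$, and by~(i) we obtain
\[
\mathrm{Median}(G'') \;=\; A'_{u^*} + ML(L+1) + L\,d_G(s,t) + nM.
\]

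Since $0\le A'_{u^*}\le n^2 M$, setting $L$ to any polynomial that dominates $n^2 M$ guarantees $A'_{u^*}/L<1$, so
\[
d_G(s,t) \;=\; \left\lfloor \frac{\mathrm{Median}(G'') - ML(L+1) - nM}{L} \right\rfloor,
\]
which can be computed in $O(1)$ additional MPC rounds. The directed case is handled, as in Lemmas~\ref{lem:SP-Diam} and~\ref{lem:SP-Radius}, by replacing each undirected edge with two antiparallel directed edges of the same weight. The main obstacle will be step~(iii): a long chain could in principle drag the ``center of mass'' of the distance sum into itself, so one must verify through the explicit cancellation $j(j-L)+(n+L)j=j(j+n)$ that no chain node ever beats $s$ or $t$ as a median candidate; this crucially relies on the chains being attached symmetrically to both endpoints. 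Bookkeeping the total memory and keeping the maximum weight polynomial in $n$ and $M$ is then straightforward but tedious.
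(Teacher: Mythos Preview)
Your single-call approach is a genuinely different route from the paper's, and the core computations in (i)--(iii) check out. The paper instead reuses the radius graph $G''$ (two copies of $G'$ glued at a node $c$) and forms $G'''$ by attaching one extra pendant vertex to each end; because $c$ remains the median of both graphs, the unknown distance-sum contribution from $V(G)$ cancels upon subtraction, and $\mathrm{median}(G''')-\mathrm{median}(G'')$ directly yields $\mathrm{diameter}(G')$, from which $d_G(s,t)$ is recovered exactly as in Lemma~\ref{lem:SP-Diam}. So the paper trades your one oracle call for two, but on instances that stay linear in $n+m$.

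That trade-off is precisely what your argument loses relative to the lemma as stated. Both your step-(ii) separation and the floor trick force $L$ to dominate $\max_u |A'_u|$, which is $\Theta(n^2M)$ in the worst case; hence your reduced graph has $\Theta(n^2M)$ vertices and cannot be stored within $O(n+m)$ total memory. Your construction therefore proves only a polynomial-total-memory version of the reduction --- still sufficient for Proposition~\ref{pro:SP-equivalence} and Theorem~\ref{thm:dconn-equivalence}, which allow $O(n^2(n+m))$ memory anyway --- but not the $O(n+m)$ bound the lemma claims. Two smaller slips: the lower bound $0\le A'_{u^*}$ fails once negative edge weights are allowed (use nearest-integer rounding rather than the floor), and defining $A'_u$ through $d_G$ rather than through distances in the augmented graph leaves your formulas undefined when $G$ is disconnected.
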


\begin{proof}
We start with the case of undirected graphs. Given an instance of shortest path, we construct the graph $G''$
as in the reduction from shortest path to radius (see proof of Lemma~\ref{lem:SP-Radius}), and compute
$\text{median}(G'')$. Then, we shall edit $G''$ by adding two nodes, $a'$ and $b'$, as well as two edges,
$\{a,a'\}$ and $\{b,b'\}$, both of weight $M$. We call the resulting graph $G'''$. This reduction can be performed
in $O(1)$ MPC rounds, it increases the number of nodes by $O(n)$ and the number of edges by $O(m)$, and the
maximum absolute weight by a factor of $O(n)$.

Then, we compute $\text{median}(G''')$. Since node $c$ is the median of both $G''$ and $G'''$, it is immediate to see that
\[
\text{median}(G''') - \text{median}(G'') = d(c,a') = \text{radius}(G'') = \text{diameter}(G'),
\]
and hence we can proceed as in the proof of Lemma~\ref{lem:SP-Diam}. 

In the directed case, we use the same weighted graphs $G''$ and $G'''$ as before, adding one parallel edge for each edge,
both with the same weight but with opposite directions. The rest of the algorithm is the same and its analysis is
analogous to the undirected case.
\end{proof}

Now we consider the evaluation of the betweenness centrality of nodes. In contrast to the previous reductions, in the following
one we shall create $n$ copies of the reduction graph leveraging Lemma~\ref{lem:replication}, and then perform some
computation in parallel.

\begin{lemma}\label{lem:SP-BC}
Shortest path is $O(1)$-round MPC reducible to betweenness centrality, with $O(n^{1-\epsilon})$ local memory
per machine for some constant $\epsilon \in (0,1)$, and $O(n(n+m))$ total memory.
\end{lemma}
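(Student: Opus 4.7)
The plan is to follow the template of Lemmas~\ref{lem:SP-Diam}--\ref{lem:SP-Median}: construct a modified graph in which the betweenness centrality of a designated node reveals whether $d_G(s,t) \leq k$. The added difficulty compared to diameter, radius, and median is that betweenness centrality is a sum over \emph{all} pairs, so the gadget must be engineered so that exactly one pair of nodes is sensitive to the threshold.

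First, I would build an augmented graph $G'$ from $G$: attach a super-node connected to every node via weight-$nM$ edges to enforce connectivity (as in Lemma~\ref{lem:SP-Diam}), and add two pendant leaves $s^*$ and $t^*$ attached respectively to $s$ and $t$ via edges of a large weight $W = n^2 M$. Because $s^*$ and $t^*$ are leaves, every shortest path between them must traverse both pendant edges together with a shortest $s$-$t$ path in $G$, for a total length of $2W + d_G(s,t)$. Using Lemma~\ref{lem:replication}, I would replicate $G'$ a constant number of times to prepare parallel variants in a single round.

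Second, in two of the copies I would insert a direct shortcut edge between $s^*$ and $t^*$ of weights $2W + k$ and $2W + k - 1$, respectively. Because $W$ is chosen much larger than the maximum possible distance $(n-1)M$ in $G$, a simple case analysis shows that the only pair of nodes whose shortest path can be re-routed by such a shortcut is $(s^*, t^*)$ itself: for any other pair, using the shortcut would cost at least $3W$, whereas its direct route through $G$ costs at most $(n-1)M + 2W$, which is strictly less. Consequently, the contribution of $(s^*, t^*)$ to $\textrm{BC}(s)$ drops from $1$ to $0$ precisely when the shortcut weight becomes strictly smaller than $2W + d_G(s,t)$. Comparing $\textrm{BC}(s)$ across the three graphs---the baseline and the two shortcut versions---yields whether $d_G(s,t) > k$, $d_G(s,t) = k$, or $d_G(s,t) < k$, with the equality case producing a distinctive tied fraction.

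The main obstacle is certifying that the shortcut edge lies on a shortest path \emph{only} for the pair $(s^*, t^*)$, so that the differences of $\textrm{BC}(s)$ across copies isolate a single clean contribution; this is exactly what the choice $W = n^2 M$ buys, and requires nothing more than bounding the weight of each candidate route. The entire pipeline---graph construction, constant-fold replication, BC computations on the modified graphs, and the final constant-round comparisons---fits in $O(1)$ MPC rounds and $O(n(n+m))$ total memory, matching the hypotheses of the lemma.
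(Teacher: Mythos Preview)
Your route is sound but genuinely different from the paper's. The paper re-uses the full diameter gadget $G'$ (with the two length-$2n$ chains attached to $s$ and $t$), applies a random weight perturbation so that every pair of nodes has a \emph{unique} shortest path, and then replicates the graph $n$ times to query $\text{BC}(u)$ for every original vertex $u$ in parallel---this is where the $O(n(n+m))$ total memory is actually spent. Because of the long chains, any node on the true $s$--$t$ shortest path satisfies $\text{BC}(u)\ge 4n^2$ while every other node of $G$ has $\text{BC}(u)\le\binom{n}{2}$; thresholding at $4n^2$ recovers the vertex set of the shortest path, and summing the floored edge weights along it outputs $d_G(s,t)$ itself. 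Your construction instead hangs single leaves on $s$ and $t$, inserts a threshold shortcut between them, and reads the answer off the change in $\text{BC}(s)$ across only a constant number of variants; the isolation argument via $W=n^2M$ is correct, and your reduction in fact needs only $O(n+m)$ total memory rather than the stated $O(n(n+m))$.

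One point you should tighten: in the tie case $d_G(s,t)=k$, the $(s^*,t^*)$ contribution becomes $\sigma/(\sigma{+}1)$ where $\sigma$ is the number of shortest $s$--$t$ paths in $G$, which can be exponential in $n$; distinguishing your ``distinctive tied fraction'' from $0$ or from $1$ then demands $\Omega(n)$ bits of precision from the betweenness oracle. The paper's random perturbation (forcing every $\sigma_{a,b}=1$, so that all BC values become small integers) was introduced precisely to sidestep this. You can adopt the same perturbation, or more simply set the shortcut weights to $2W+k\pm\tfrac12$ so that ties never occur and the difference in $\text{BC}(s)$ is always exactly $0$ or exactly $2$.
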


\begin{proof}
Once again, we start with the case of undirected graphs. In the directed case we use the same weighted graph adding
one parallel edge for each edge, both with the same weight but with opposite directions, with an analogous analysis. 

Given an instance of shortest path, we construct the graph $G'$ of Figure~\ref{fig:SP-Diam} as in the reduction
from shortest path to diameter. Then, we modify the weights of the edges of $G'$ in such a way that exactly one
shortest path exists from any node to any other node, and that the length of the original shortest path in $G$
can be easily recovered. To this end, since by assumption the weights of the edges are integers, it is sufficient to
increase the weight of each edge of the starting graph $G$ by a real value chosen independently and uniformly
at random from $[1/n^3, 1/n^2]$. This reduction can be performed in $O(1)$ MPC rounds, it increases the number
of nodes and the number of edges by $O(n)$, and the maximum absolute weight by a factor of $O(n)$.

Now we create $n-1$ more copies of this graph, which by Lemma~\ref{lem:replication} can be done in $O(1)$ MPC
rounds, and compute the betweenness centrality of each node of $G$, in parallel on each copy of the graph. Since
there is a single shortest path from any node to any other node, the betweenness centrality of a node $u$ is the
total number of shortest paths in the graph that use $u$ as an intermediate node. Consider the (unique) shortest
path from $s$ to $t$, and let $A$ be the set of its nodes. Let $B = V \setminus A$ be the remaining nodes of $G$.
Then observe that, for any node $u \in A$, $\text{BC}(u) \geq 2n \cdot 2n$, and that for any node $u \in B$,
$\text{BC}(u) \leq \binom{n}{2}$. Hence, to compute the shortest path from $s$ to $t$ in $G$ it is sufficient
to consider only the nodes whose betweenness centrality is no less than $4n^2$, and return the sum of the floors of
the weights of all edges with both endpoints in this set of nodes. This can be easily done in $O(1)$ MPC rounds.
\end{proof}

An immediate consequence of these results is the following.

\begin{proposition}\label{pro:SP-equivalence}
Shortest path, SSSP, APSP, diameter, radius, median, and betweenness centrality are all equivalent under $O(1)$-round
MPC reductions, with $O(n^{1-\epsilon})$ local memory per machine for some constant $\epsilon \in (0,1)$, and
$O(n^2 (n+m))$ total memory.
\end{proposition}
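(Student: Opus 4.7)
The plan is to close a cycle of $O(1)$-round MPC reductions through all seven problems. Lemma~\ref{lem:APSP-SP} already gives APSP $\equiv$ shortest path, and Lemmas~\ref{lem:SP-Diam}, \ref{lem:SP-Radius}, \ref{lem:SP-Median}, and \ref{lem:SP-BC} give shortest path $\leq$ diameter, radius, median, and betweenness centrality, respectively. It therefore suffices to reduce each of SSSP, diameter, radius, median, and betweenness centrality back to APSP in $O(1)$ rounds within the memory budget of the statement; transitivity of $O(1)$-round reductions then delivers the claimed pairwise equivalence.

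The first four of these reverse reductions are immediate once the APSP distance matrix resides in memory: SSSP is a single row, diameter is the global maximum entry, radius is $\min$-of-row-$\max$, and median is $\min$-of-row-$\sum$. Each of these aggregations is computable by a polynomial-size $\NC^1$ circuit, so by Corollary~\ref{cor:circuit_simulation} each runs in $O(1)$ MPC rounds under the stated memory assumptions.

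The delicate case is betweenness centrality, since $\text{BC}(v)$ counts shortest paths rather than just measuring their length. I would handle it by reusing the perturbation device from the proof of Lemma~\ref{lem:SP-BC}: add to each edge weight a small random offset drawn independently from $[1/n^3, 1/n^2]$, so that with high probability every ordered pair of distinct nodes has a unique shortest path. Under this uniqueness, $\sigma_{s,t} = 1$ and $\sigma_{s,t}(v) \in \{0,1\}$ equals the indicator of $d(s,v) + d(v,t) = d(s,t)$, whence $\text{BC}(v) = |\{(s,t) : s \neq t,\ v \notin \{s,t\},\ d(s,v) + d(v,t) = d(s,t)\}|$. Given the APSP distance matrix, this is a sum over $O(n^3)$ triples, once more executable in $O(1)$ rounds by Corollary~\ref{cor:circuit_simulation}.

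The main point to check is memory accounting: we must simultaneously store the $\Theta(n^2)$ distance matrix, evaluate the cubic aggregation for BC, and host the input-replication subroutines invoked inside Lemmas~\ref{lem:APSP-SP}--\ref{lem:SP-BC}. The most demanding of these is the APSP-to-shortest-path reduction of Lemma~\ref{lem:APSP-SP}, which already consumes $O(n^2(n+m))$ total memory, so the budget stated in the proposition suffices for all composed reductions. Once this bookkeeping is verified, assembling the cycle yields the full equivalence.
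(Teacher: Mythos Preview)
Your overall plan coincides with the paper's: close a cycle by combining Lemma~\ref{lem:APSP-SP} and Lemmas~\ref{lem:SP-Diam}--\ref{lem:SP-BC} with the ``easy'' reverse directions back to APSP via aggregation over the distance matrix. The paper's own proof is in fact terser than yours---it only names the diameter and radius reductions to APSP as obvious max/min computations, mentions SSSP, and then asserts that the proposition follows from the cited lemmas. It does not spell out median $\to$ APSP or betweenness centrality $\to$ APSP at all.

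Your attempt to fill in the BC $\to$ APSP direction, however, has a genuine gap. In that direction the graph $G$ is the \emph{given} BC instance, and you must output $\text{BC}(v)$ \emph{in $G$}. Perturbing $G$ to $G'$ and then counting triples $(s,v,t)$ with $d'(s,v)+d'(v,t)=d'(s,t)$ computes the betweenness centrality of $v$ in $G'$, not in $G$, and these disagree whenever $G$ has ties among shortest paths. For example, on a unit-weight $4$-cycle $a\text{--}b\text{--}c\text{--}d\text{--}a$, the pair $(a,c)$ contributes $\sigma_{a,c}(b)/\sigma_{a,c}=1/2$ to $\text{BC}(b)$; after perturbation it contributes either $0$ or $1$, depending on which of the two shortest $a$--$c$ paths survives. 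The perturbation device is legitimate in Lemma~\ref{lem:SP-BC} only because there the reduction \emph{constructs} the BC instance and is free to choose a perturbed graph; here the instance is handed to you and you may not alter the quantity you are asked to compute. So the step as you wrote it does not reduce BC to APSP. (The paper's proof simply does not address this direction, so you have not dropped below its level of detail---but the argument you supply in its place is incorrect.)
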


\begin{proof}
The two reductions involving SSSP are obvious. The reduction from diameter (or radius) to APSP is also obvious,
since determining the maximum (or minimum) in a set of values can be easily done in $O(1)$ MPC rounds.
The theorem then follows from \cref{lem:APSP-SP,lem:SP-Diam,lem:SP-Radius,lem:SP-Median,lem:SP-BC}.
\end{proof}

It is interesting to observe that this equivalence class includes problems, such as SSSP and APSP, that in the
(sequential) RAM model have vastly different complexities, and that an analogous reduction from APSP to
diameter in the RAM model seems elusive~\cite{AbboudGW15}.

%(In hindsight, it should not come as a surprise that all the above problems admit a fast reduction from Shortest Path,
%since they are all defined in terms of shortest paths.)

We can now summarize all the results of this section.

\begin{theorem}\label{thm:dconn-equivalence}
The following problems are all equivalent under $O(1)$-round MPC reductions, with $O(n^{1-\epsilon})$ local memory
per machine for some constant $\epsilon \in (0,1)$, and $O(n^{2}(n+m))$ total memory: $st$-reachability, strong connectivity,
directed cycle detection, unweighted shortest path, unweighted SSSP, unweighted APSP, unweighted diameter, unweighted radius,
unweighted median, and unweighted betweenness centrality.
\end{theorem}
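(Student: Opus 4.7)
The plan is to assemble the theorem by stitching together the equivalences already proved in this section, using (unweighted) shortest path as the pivot.

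First, I would invoke the equivalence between (unweighted) shortest path and $st$-reachability established by the layered-graph lemma earlier in this subsection. The same layered-graph construction yields a constant-round MPC reduction between directed cycle detection and $st$-reachability (identify the designated endpoints and handle self-loops appropriately). For strong connectivity, I would appeal to the constant-round MPC reduction implicit in the result of Chandra et al.~\cite{ChandraSV84}. This places $st$-reachability, strong connectivity, directed cycle detection, and unweighted shortest path into a single $O(1)$-round equivalence class.

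Next, I would invoke Proposition~\ref{pro:SP-equivalence}, which already establishes the $O(1)$-round MPC equivalence of shortest path with SSSP, APSP, diameter, radius, median, and betweenness centrality. Chaining through the common pivot of shortest path merges the two groups, yielding the single equivalence class claimed by the theorem. The total memory bound $O(n^2(n+m))$ is the maximum over the per-reduction budgets, attained by the APSP-to-shortest-path reduction in Lemma~\ref{lem:APSP-SP}.

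The main obstacle will be to verify that the reductions feeding into Proposition~\ref{pro:SP-equivalence}, which are stated for integer-weighted graphs, carry over to the unweighted versions asserted in the theorem. The weighted gadgets used in \cref{lem:SP-Diam,lem:SP-Radius,lem:SP-Median,lem:SP-BC} (chains of edges of weight $M$, ``universal'' edges of weight $nM$, and the like) can be emulated in the unweighted setting by replacing each weighted edge with an unweighted path of equal length; since the weights involved are polynomially bounded in $n$, this incurs only polynomial blowup in the node and edge counts, well within the stated total-memory budget. The random-perturbation trick used in the reduction to betweenness centrality to enforce unique shortest paths requires a separate treatment in the unweighted case, but can be handled by attaching small unweighted gadgets to each edge that implement an isolation argument, without affecting the round complexity.
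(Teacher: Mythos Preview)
Your proposal mirrors the paper's treatment: the paper presents this theorem simply as a summary of the section's results, implicitly chaining the $st$-reachability/shortest-path lemma, the remarks on directed cycle detection and strong connectivity (via Chandra et al.), and Proposition~\ref{pro:SP-equivalence}, with the $O(n^2(n+m))$ bound coming from Lemma~\ref{lem:APSP-SP}. You are in fact more careful than the paper, which states the theorem for unweighted problems but never spells out how the weighted gadgets of \cref{lem:SP-Diam,lem:SP-Radius,lem:SP-Median,lem:SP-BC} are to be realized in the unweighted setting; your path-subdivision remedy and the note that betweenness centrality needs a separate isolation-style argument are reasonable fixes for a gap the paper leaves implicit.
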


\paragraph{Conditional hardness: problems hard for $\NL$ under $O(1)$-round MPC reductions.}
Finally, there are problems known to be hard for $\NL$ under $\AC^0$, and thus $\NC^1$ and $O(1)$-round MPC,
reductions, but not known to be in $\NL$. Some examples are perfect matching (even in bipartite graphs), network flow,
and circuit evaluation~\cite{ChandraSV84}. Since for these problems only one-way reductions from problems in $\NL$
are known, we don't know whether they are part of the equivalence class of directed graph connectivity.

\section{Open Problems}

The present work can be naturally extended in several directions. One obvious direction is to prove more conditional
lower bounds based on the conjectures of this paper, and to show more equivalences between problems.

Several results of this paper, from the connections between MPC computations and space complexity of Section~\ref{sec:MPC-Space}
to the reductions of Section~\ref{sec:reductions}, crucially require the availability in the system of a total amount of memory
super-linear in the size of the input. These results have no implications for the (perhaps more) interesting case of low
total memory---that is, linear or near-linear in the input size.\footnote{For instance, since the circuit simulation results
in~\cite{RoughgardenVW18,FreiW19} use super-linear total memory, the barrier for unconditional lower bounds they imply
falls for the linear total memory case, for which it is therefore still open the possibility of proving unconditional lower bounds.}
Hence, it would be interesting to establish equivalence classes and show implications that hold under more severe restrictions
on the total amount of available memory.

Finally, it is tempting to speculate that improved algorithms for any of the problems discussed in this paper could have
significant consequences in other models of computation, such as falsifying some widely-believed conjecture in complexity
theory. Identifying new consequences of their falsification would add further weight to the conjectures of this paper.

\section*{Acknowledgments}
This project has received funding from the European Research Council (ERC) under the European Union's Horizon 2020
research and innovation programme under grant agreement No 715672. M.~Scquizzato was also partially supported by
the University of Padova under grant BIRD197859/19.

\bibliographystyle{abbrv}
\bibliography{biblio}

\newpage

\appendix
\section*{APPENDIX}

\section{Space Complexity of Fundamental Problems}\label{sec:space}

Here we report what is known about the space complexity of several fundamental problems. Two good sources of problems
complete for $\L$ or $\NL$ are~\cite{CookM87,JonesLL76}.

\begin{description}
\item[Graph Connectivity] $\L$-complete: $\L$-hard~\cite[Theorem 3]{CookM87}, and in $\L$ by virtue of the remarkable algorithm
of Reingold~\cite{Reingold08}. Remains $\L$-complete for promise graphs that are a disjoint union of cycles~\cite[Theorem 3]{CookM87}.

\item[$st$-connectivity] $\L$-complete, by virtue of a non-obvious equivalence with graph connectivity under projection
reducibility shown by Chandra et al.~\cite{ChandraSV84}.

\item[$st$-reachability for directed graphs of out-degree one] This is the out-degree one version of $st$-reachability.
It is $\L$-complete~\cite{Jones75}.

\item[Order Between Vertices] Given a directed path, specified by giving for each node its successor in the path,
and two distinguished nodes $a$ and $b$, \emph{Order Between Vertices (ORD)}, sometimes also
called \emph{Path Ordering}, asks to determine whether $a$ precedes $b$. ORD is $\L$-complete~\cite{Etessami97}.

\item[Formula Evaluation] A \emph{formula} is a circuit where each gate has fan-out (out-degree) exactly one, where
the underlying algebraic structure is the Boolean algebra. Hence a formula is a circuit whose underlying graph is a tree.
%Formulas represent computations where the results of subcomputations cannot be used more than once.
It is easy to see that Boolean formula evaluation is in $\L$. A seminal paper by Buss shows that Boolean formula
evaluation belongs to $\NC^1$~\cite{Buss87}. However, for this result it is crucial that the Boolean formula is given
as a string (for instance its preorder notation), and not as a tree in pointer representation (e.g., by the list of all edges
plus gate types). For the latter representation, the problem is $\L$-complete~\cite{BeaudryM95}.

\item[Cycle Detection] $\L$-complete, even when the given graph contains at most one cycle~\cite{CookM87}.

\item[Planarity Testing] Is a given graph planar? Allender and Mahajan~\cite{AllenderM04} showed that this problem
is hard for $\L$ under projection reducibility (even for graphs of maximum degree 3), and that it lies in $\SL$.
Thus, by the result of Reingold~\cite{Reingold08}, planarity testing is $\L$-complete.

\item[Densest Subgraph] Given an undirected graph and a number $k$, is the density of a densest subgraph at least $k$?
Observe that cycle detection is a special case of this problem with $k=1$, and thus densest subgraph is $\L$-hard.

\item[$st$-reachability] This is $st$-connectivity in directed graphs, that is, the problem of detecting whether
there is a path from a distinguished node $s$ to a distinguished node $t$ in a directed graph. It is denoted STCON,
and also known as \emph{directed $st$-connectivity}, \emph{graph reachability}, \emph{PATH}, or \emph{graph
accessibility problem (GAP)}. It is the prototypical complete problem for $\NL$~\cite{Papadimitriou94,Sipser97,AroraB09}.
(This result was first proved by Jones~\cite{Jones75}, and is implicit in~\cite{Savitch70}, where STCON is called the
``threadable  maze'' problem.) It remains $\NL$-complete for the stronger case of first-order reductions~\cite{Immerman99}.
It is $\L$-hard~\cite{Savitch70,LewisP82}, but not known to be in $\L$. In Eulerian directed graphs (i.e., directed graphs
where each node has in-degree equal to its outdegree) it is in $\L$~\cite{ReingoldTV06}.

\item[Strong Connectivity] $\NL$-complete: equivalent to $st$-reachability under $\AC^0$ reductions~\cite{ChandraSV84}.

\item[Shortest Path] Given an undirected (unweighted) graph, two distinguished nodes $s$ and $t$, and an integer $k$,
the problem of determining if the length of a shortest path from $s$ to $t$ is $k$ is $\NL$-complete~\cite{BorodinCDRT89}.

\item[Directed Cycle Detection] Given a directed graph, does it contain a directed cycle? $\NL$-complete~\cite{Sipser97}.

\item[2SAT] $\NL$-complete~\cite[Theorem 16.3]{Papadimitriou94}.

\item[NFA/DFA Acceptance] $\NL$-complete~\cite{Sipser97}.

\item[Perfect Matching] $\NL$-hard, even in bipartite graphs, because of a $\AC^0$ reduction from $st$-reachability~\cite{ChandraSV84}.
(Also $\L$-hard, even on $k$-trees~\cite[Lemma 5.1]{DasDN13}.) It is a long-standing open question to determine whether
perfect matching is in $\NC$ (despite some recent substantial progress~\cite{SvenssonT17}).

\item[Bipartite Matching, Network Flow] Equivalent under $\AC^0$ reductions to bipartite perfect matching~\cite{ChandraSV84},
and thus $\NL$-hard.

\item[Circuit Evaluation] It is $\P$-complete under $\AC^0$ reductions~\cite{ChandraSV84}, and thus also $\NL$-hard and $\L$-hard.

\end{description}

\subsection{L-Completeness of the One Cycle vs.\ Two Cycles Problem}\label{sec:space_1vs2cycles}

In~\cite[Theorem 3]{CookM87} it is shown that graph connectivity when the given graph is known to be a disjoint union
of cycles is $\L$-hard. A careful inspection of the reductions used to establish this result reveals that
the problem remains hard even when the graph is known to be made up of either one or three cycles. By reducing
from a different problem, we now show that graph connectivity remains hard even when the graph is known to be
made up of either one or two cycles.\footnote{Graphs are allowed to have parallel edges, that is, cycles with two edges.}

\begin{proposition}
Graph connectivity for promise graphs that are either one cycle or two cycles is $\L$-complete.
\end{proposition}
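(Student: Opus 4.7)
The plan is to establish the two conditions of $\L$-completeness separately.

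For containment in $\L$, I observe that the input is a special case of an undirected graph and we are asked a connectivity question; this places the problem in $\L$ via Reingold's theorem~\cite{Reingold08}.

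For $\L$-hardness I would reduce from Order Between Vertices (ORD), which is $\L$-complete under first-order projections by Etessami~\cite{Etessami97}. Recall that an ORD instance specifies a path $P = v_1 v_2 \cdots v_n$ via its successor function $\sigma$ together with two query vertices $a = v_i$ and $b = v_j$, and asks whether $i < j$. My construction proceeds in two stages. First I would pad the path, prepending and appending two dummy vertices on each side, so that after relabeling the new path $v_1 v_2 \cdots v_N$ (with $N = n+4$) places $a$, $b$, $\sigma(a)$, $\sigma(b)$ strictly in its interior while preserving the relative order of $a$ and $b$. Then I would define the undirected graph $G'$ on $\{v_1, \ldots, v_N\}$ whose edge set consists of all path edges $(v_k, v_{k+1})$ \emph{except} the two edges $(a, \sigma(a))$ and $(b, \sigma(b))$, plus three new edges $(a, v_1)$, $(b, v_N)$, and $(\sigma(a), \sigma(b))$.

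Next I would verify via a routine case-by-case degree count that every vertex of $G'$ has degree exactly two (the padding guarantees that the three new edges are pairwise distinct and distinct from the retained path edges, so no self-loops or parallel edges arise), and thus $G'$ is a disjoint union of cycles and meets the promise. The heart of the argument would then be a cycle trace starting at $v_1$. If $i < j$, the walk along the path reaches $v_i = a$ first, and since $a$'s only other neighbor in $G'$ is $v_1$ it closes one cycle on $\{v_1, \ldots, v_i\}$; a second trace from $v_{i+1}$ yields a second cycle that runs $v_{i+1} \to \cdots \to v_j \to v_N \to \cdots \to v_{j+1} \to v_{i+1}$, so $G'$ is two cycles. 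If $j < i$, the trace from $v_1$ instead encounters $v_j = b$ first, is diverted through $v_N$ via the edge $(b, v_N)$ and then through $\sigma(a)$ and $\sigma(b)$, ultimately visiting every vertex in a single cycle. Hence $G'$ has exactly $2$ cycles iff $a$ precedes $b$ in $P$. Finally I would remark that the construction is first-order (hence $\NC^1$) computable, since whether a pair $(u, w)$ is an edge of $G'$ can be expressed as a short FO formula in $\sigma$ together with the identities of $a$, $b$, $v_1$, and $v_N$ (the endpoints being themselves FO-definable as the vertices with no predecessor or successor).

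The hardest part will be to verify the cycle-trace argument in the degenerate cases $b = \sigma(a)$ and $a = \sigma(b)$, where the new edge $(\sigma(a), \sigma(b))$ coincides with one of the two removed path edges and is thus effectively re-added; the net graph is subtly different and the cycle count must be rechecked by hand. The padding already takes care of the other potentially problematic situations (for instance, $a$ or $b$ being an endpoint of $P$ or adjacent to one, which would otherwise cause a self-loop or a parallel edge).
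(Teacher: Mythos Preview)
Your proposal is correct and follows essentially the same approach as the paper: membership in $\L$ via Reingold, and hardness via an $\NC^1$ reduction from ORD that cuts the path at $a$ and $b$ and splices in three new edges to the endpoints and between the cut points. The only differences are cosmetic---the paper removes the \emph{incoming} arcs $(a',a)$, $(b',b)$ and adds $\{s,a\}$, $\{a',b'\}$, $\{b,t\}$, whereas you remove the \emph{outgoing} arcs and add the mirror-image edges---and your explicit padding and case analysis for $b=\sigma(a)$, $a=\sigma(b)$ make the degree-two verification more rigorous than the paper's brief sketch.
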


\begin{proof}
Membership in $\L$ is guaranteed by the algorithm of Reingold~\cite{Reingold08}. To show $\L$-hardness, we shall exhibit
an $\NC^1$ reduction from order between vertices. Given an instance $(G,a,b)$ for order between vertices, we build a new
graph $G'$ as follows: (1) the two arcs pointing to $a$ and to $b$, denoted $(a',a)$ and $(b',b)$, respectively, are removed,
(2) the direction of each of the remaining $n-3$ arcs is discarded, and (3) edges $\{s,a\}$, $\{a',b'\}$, and $\{b,t\}$ are
added, where $s$ denotes the source and $t$ the sink of $G$, respectively. See Figure~\ref{fig:ORD-1vs2}.
This construction is an $\NC^1$ reduction. The resulting graph $G'$ consists of two cycles if $a$ precedes $b$ in $G$,
and of one single cycle otherwise.
\begin{figure}
    \begin{center}
       \includegraphics[trim=0cm 0.6cm 0cm 0.7cm, clip=true, width=\textwidth]{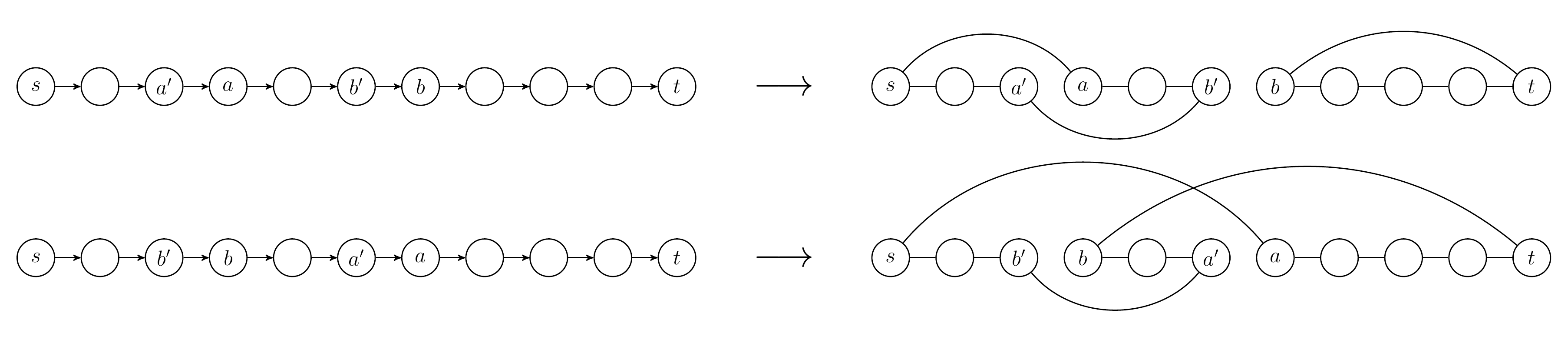}
       \caption{Reduction from order between vertices to one cycle vs.\ two cycles.}\label{fig:ORD-1vs2}
    \end{center}
\end{figure}
\end{proof}

\end{document}